\newtheorem{theorem}{Therorem}
\newtheorem{lemma}[theorem]{Lemma}
\newtheorem{corollary}[theorem]{Corollary}
\newtheorem{definition}{Definition}
\newcommand{\bdm}{
    \begin{displaymath}}
\newcommand{\edm}{
    \end{displaymath}}
\newcommand{\be}{
    \begin{equation}}
\newcommand{\ee}{
    \end{equation}}
\newcommand{\bea}{
    \begin{eqnarray}}
\newcommand{\eea}{
    \end{eqnarray}}
\newcommand{\beit}{\begin{itemize}}
\newcommand{\eeit}{\end{itemize}}
\newcommand{\eat}[1]{ }
\begin{document}
%
\title{Performance Guarantee under Longest-Queue-First Schedule in Wireless Networks}

\author{\IEEEauthorblockN{Bo Li, Cem Boyaci and Ye Xia\\}
\IEEEauthorblockA{Computer and Information Science and Engineering
Department, \\ University of Florida, Gainesville, Florida, USA\\
Email: \{boli,cboyaci,yx1\}@cise.ufl.edu} }
\maketitle




%
\IEEEpeerreviewmaketitle


\begin{abstract}
Efficient link scheduling in a wireless network is challenging. Typical optimal algorithms require solving an
NP-hard sub-problem. To meet the challenge, one stream of research
focuses on finding simpler sub-optimal algorithms that have low
complexity but high efficiency in practice. In this paper, we
study the performance guarantee of one such scheduling algorithm,
the Longest-Queue-First (LQF) algorithm. It is known that the LQF
algorithm achieves the full capacity region, $\Lambda$, when the
interference graph satisfies the so-called local pooling
condition. For a general graph $G$, LQF
achieves (i.e., stabilizes) a part of the capacity region,
$\sigma^*(G) \Lambda$, where $\sigma^*(G)$ is the overall local
pooling factor of the interference graph $G$ and $\sigma^*(G) \leq 1$. It has been shown later that LQF achieves a larger rate region, $\Sigma^*(G)
\Lambda$, where $\Sigma^*(G)$ is a diagonal matrix. The
contribution of this paper is to describe three new achievable
rate regions, which are larger than the previously-known regions.
In particular, the new regions include all the extreme points of
the capacity region and are not convex in general. We also
discover a counter-intuitive phenomenon in which increasing the
arrival rate may sometime help to stabilize the network. This
phenomenon can be well explained using the theory developed in the
paper.


\begin{keywords}
Wireless Networks Scheduling, Longest Queue First Policy,
Stability, Local Pooling, Interference
\end{keywords}

\end{abstract}

\section{Introduction}
\label{sec:introduction}

One of the long-standing challenges for wireless networks is how
to utilize the communication medium efficiently when links
interfere with each other. This paper is primarily concerned with
an interference model called the protocol model, where two
wireless links that interfere with each other are prohibited to
transmit data simultaneously \cite{LS06A},
\cite{balakrishnan-distance}, \cite{SMS06}. For the protocol
model, a scheduling algorithm strives to select a set of
non-interfering links for transmission in every time slot.

Finding efficient schedules can be very difficult. Tassiulas and
Ephremides \cite{Tassiulas92} showed that if the queue sizes for
the links (which are nodes in the interference graph) are viewed
as weights and a maximum weight independent set (MWIS) of the
interference graph is selected as the schedule in each time slot,
then the queues of the wireless network can be stabilized for any
arrival rate vector inside the capacity region. However, finding
an MWIS is NP-hard in general. Even under the more restricted
$k$-hop interference model, finding an MWIS is still NP-hard for
$k \geq 2$ \cite{sharma06maxweigted, SMS06}. For the 1-hop
interference model, the problem of finding an MWIS reduces to
\emph{maximum weight matching} and the complexity is $O(|V|^3)$, where $|V|$ is the number of wireless links \cite{Shroff08}. Hence, scheduling using MWIS is inapplicable to
large networks.

\eat{

In 1-hop interference model (also known as the node-exclusive or
primary interference model), two links interfere when they share a
common node. The interference relationship can also be represented
by an \emph{interference graph}(or conflict graph), in which a
node represents a physical link and an edge represents the
interference between two physical links. In 2-hop interference
model, two links interfere when the corresponding nodes in the
interference graph are within 2-hops away. While the 1-hop
interference model is suitable to characterize the interference in
FH-CDMA and Bluetooth networks \cite{LS06A}, the 2-hop
interference model successfully captures the interference
relationship in the IEEE 802.11 network
\cite{balakrishnan-distance}. The more general k-hop interference
models are considered in \cite{sharma06maxweigted} and
\cite{SMS06}.

}

To reduce the complexity, some simple sub-optimal scheduling
algorithms have been introduced \cite{CLCD06, GLS07, JS07, LR06,
MEDZ06, SSBS07, Chaporkar05}. The \emph{Longest Queue First} (LQF, also known
as the greedy maximal schedule) policy is recognized for its high performance in practice \cite{Shroff08}. The LQF schedule chooses links in a decreasing order of
the queue sizes while conforming to the interference constraints.
Dimakis and Walrand showed that the LQF algorithm achieves
(stabilizes) the entire interior of the capacity region,
$\Lambda$, when the interference graph $G$ satisfies the so-called
overall local pooling condition \cite{DW06}. For general cases,
Joo {\em et al.} introduced a parameter called the overall local
pooling factor, denoted by $\sigma^*(G)$, where $0 < \sigma^*(G) \leq
1$, based on the topology of the interference graph $G$; they
showed the LQF algorithm achieves a subset of the capacity region,
$\sigma^*(G) \Lambda$ \cite{Shroff08}. Several other authors studied how to check the local pooling
condition or estimate $\sigma^*(G)$ for specific graphs
\cite{LE09, Be10, MA10}.

A single-parameter performance characterization of LQF suggests a uniform rate reduction on the links. However, it is possible that the links are subject to heterogeneous interference relations and some links can perform better than others. To capture the performance heterogeneity, a multiple-parameter
characterization of the stabilizable region by LQF was established
in \cite{BCY09}. It was shown that LQF can achieve a larger rate region,
$\Sigma^*(G) \Lambda$, where $\Sigma^*(G)$ is a diagonal matrix. Each diagonal entry of $\Sigma^*(G)$ corresponds to a link and it summarizes the link's interference constraints.



Even this multiple-parameter characterization of LQF underestimates the stability region. For instance, it excludes some parts of the capacity region that are obviously stabilizable by LQF. To progress further toward complete performance characterization, there is a need to go beyond the current framework of linear transformations on the capacity region. 
The goal of this study is to establish such a ``non-linear'' framework and expand our knowledge about the achievable rate region by LQF. The main contribution is to describe three new
achievable rate regions ($\Omega$, $\Delta_C$ and $\Delta_R$), which
are all larger than the previously-known regions. More precisely, we
show that $\Sigma^*(G) \Lambda \subseteq \Omega$ and $\Omega^o
\subseteq \Delta_C \subseteq \Delta_R$. 
Furthermore, the closures of the new regions include all the extreme points of the capacity region
and are not convex in general. This is in contrast to
previously-known regions of stability, which are
all convex and, in general, exclude some extreme points of the capacity region because they each are derived by reducing
the capacity region through a linear transformation. We
show that the new regions of stability (or their closures) are convex if and only if they are identical to the capacity region itself.
The result implies that, when LQF cannot achieve the full capacity
region, the largest achievable region, which is yet to be
discovered, cannot be convex.



The characterization of the LQF performance has been substantially improved with these new stability regions. For instance, we have found that, for an arbitrarily large $k > 0$, there are cases where an arrival rate vector $\lambda$ is outside all the previously-known stability regions but $k \lambda$ is in $\Omega$. In other words, the previously-known stability regions can underestimate the performance of LQF by an
arbitrarily large factor in certain cases, whereas the new regions can avoid such poor estimates.

The study has also yielded an interesting, counter-intuitive finding that
increasing the arrival rates may sometime help to stabilize the
network. We have discovered an example where a
rate vector achievable by LQF point-wise dominates another
rate vector not achievable by LQF. It turns out the former vector is in the stability region $\Delta_C$ whereas the latter is not.


We next summarize the key ideas of the paper.
Our theory is developed based on considering the {\em fluid limit} of an unstable network. A typical scenario is that the maximum queue size has an overall trend to grow indefinitely, which requires that, at some time $t$, a subset of the current longest queues continues to grow. From the set of the longest queues at time $t$, there is a subset that grows at the fastest rate and remains the longest in the next infinitesimal time interval. Denote this subset by $S$. Under LQF, the queues in $S$ will be served with priority in the next small time interval, which implies that the average service rate vector, when restricted to $S$, comes from the convex hull of the maximal schedules with respect to $S$. This convex hull is denoted by $Co(M_S)$. For the queues in $S$, the arrival rates must be larger than the service rates.
The discussion motivates the definition of a strictly dominating vector for a queue set $S$, which is a vector $\lambda$, when restricted to $S$, strictly dominating at least one vector in $Co(M_S)$. After removing the union of the strictly dominating vectors, where the union is over all possible subsets of the queues, we get $\Omega$.

Key to the development about $\Delta_C$ is a refinement to the notion of strictly dominating vectors, which is called uniformly dominating vectors. For the aforementioned queue set $S$, the arrival rates not only must be larger than the service rates, but also larger by the same amount, so that the queues in $S$ grow at the same rate. The removal of all the uniformly dominating vectors gives $\Delta_C$. Although the closure of $\Delta_C$ contains $\Omega$, there is value in studying and reporting the results about both regions. First, the theory about $\Omega$ provides building blocks for proving some of the results about $\Delta_C$. Second, $\Omega$ appears to be well connected to the notion of local pooling in \cite{DW06}, thus, providing some continuity in the theoretical development, whereas $\Delta_C$ does not appear so.




Throughout, we assume i.i.d. and mutually independent arrival processes. As Dimakis and
Walrand pointed out, for the same average arrival rate vector,
whether the arrival processes have zero or non-zero variances leads
to significantly different stability behavior (the former is the
case of deterministic arrivals with constant rates) \cite{DW06}.
They established a queue separation result for the case of
non-zero variances and developed a rank condition that leads to
queue separation. We generalize the rank condition. Then, we extend
$\Delta_C$ to a larger stability region $\Delta_R$ for the case of
non-zero variances. We also show the closures of $\Delta_C$ and
$\Delta_R$ are the same.

Finally, we relate the problems of finding stability conditions
under LQF to several problems in the fractional graph theory
\cite{SU97}. The latter provide tools for studying the stability
regions introduced by the paper and for characterizing the set
$\sigma$-local pooling factor given in \cite{BCY09}.

The rest of the paper is organized as follows. In Section \ref{sec:preliminaries}, we specify the models and notations. In Sections \ref{sec:omega} and \ref{sec:delta}, we introduce the $\Omega$ and $\Delta$ ($\Delta_C$ and $\Delta_R$) regions, respectively. In Section \ref{sec:hypergraph}, we introduce the fractional coloring and related problems that are relevant to the study of the stability regions. In Section \ref{sec:experiment}, we give some simulation results to confirm aspects of the theory. The conclusion is given in Section \ref{sec:conclusion}.


\section{Preliminaries}
\label{sec:preliminaries}

In our model, a wireless network is represented by an undirected interference (or conflict) graph $G = (V, E)$, where the node set $V$ represents the set of
physical, wireless links in the network and the edge set $E$ represents the interference
relation among the physical links. Two nodes in $G$ are connected
with an edge whenever the physical links they represent interfere
with each other.\footnote{All the graphs in this paper are
interference graphs, unless specified otherwise.} We assume the node set $V$ is arbitrarily indexed from 1 to $|V|$, and hence, $V$ can be written as $V = \{1, \ldots, |V|\}$.

Given a subset
of nodes $S \subseteq V$, we denote $G_S=(S, L)$ to be the
subgraph of $G$ induced by the nodes in $S$. In other words, 
an edge $(u,v)$ belongs to $L$ if and only if $u, v \in
S$ and $(u,v) \in E$.

We assume a time-slotted system. The capacity of each wireless link is normalized to 1 per time slot. There is a queue associated with each wireless link at the transmitter. We assume single-hop traffic. Traffic arrives at the transmitter side of a link, joining the queue and waiting for transmission; after transmission, it leaves the network. We assume i.i.d. and mutually independent arrival processes to the queues. It is easy to see that, under the LQF schedule with either deterministic or typical random tie-breaking rules, the joint queue process is Markovian. By stability, we mean the Markov process is positive recurrent\footnote{Without loss of generality, we assume the Markov Chain is irreducible. See \cite{Tassiulas92} for general cases.}.

A schedule is denoted by a
$|V|$-dimensional 0-1 vector, where a value 1 in an entry
indicates the corresponding link is active and 0 otherwise. A
schedule is feasible if and only if the links that are active do
not interfere with each other. A feasible schedule is said to be
\emph{maximal} if no additional links can be activated without violating
the interference constraints. Therefore, every feasible schedule
is an independent set of $G$ and every maximal schedule is a
maximal independent set of $G$.

For the graph $G=(V, E)$, let $M_V$ denote the set of all the
maximal schedules and let $Co(M_V)$ denote the convex hull of all
the maximal schedules. When relevant, we also consider $M_V$ to be
the matrix whose columns are all the maximal schedules, with arbitrary indexing of the schedules. Similarly,
for a node-induced subgraph $G_S=(S, L)$, let $M_S$ be
the set (or matrix) representing all the maximal schedules of $G_S$
and let $Co(M_S)$ be the convex hull of all the maximal schedules in
$M_S$.

The capacity region $\Lambda$ of a network is defined as the set
of arrival rate vectors that are supportable by time sharing of
the feasible schedules. Equivalently,
\begin{align} \label{eq:capregion}
\Lambda & = \{ \lambda \ | \ 0 \leq  \lambda \leq  \mu \text{ for
some } \mu \in Co(M_V) \}.
\end{align}
For a non-empty subset of nodes $S \subseteq V$, the capacity region is
defined analogously by replacing $Co(M_V)$ with $Co(M_S)$ in
(\ref{eq:capregion}) and is denoted by $\Lambda_S$. In the above,
$\lambda \leq \mu$ means that vector $\lambda$ is component-wise
less than or equal to vector $\mu$.
The {\em interior} of the capacity region can be written as
\begin{align}
\Lambda^o = & \{ \lambda \ | 0 \leq \ \lambda < \mu \text{ for some
$\mu \in Co(M_V)$}\}.
\end{align}

The interior of the capacity region thus defined can be stabilized by the MWIS schedule and any rate vector outside the capacity region cannot be stabilized by any schedule \cite{Tassiulas92}.


Given a $|V|$-dimensional vector $\lambda$, the $|S|$-dimensional
vector $[\lambda]_S$ represents the restriction of $\lambda$ to
the set $S \subseteq V$. That is, $[\lambda]_S$ contains only
those components of $\lambda$ which correspond to the nodes in
$S$.

For a vector $\mu$ defined for a node set, let $\mu_l$ or $\mu(l)$ denote the
component associated with $l \in V$. Note that, if $\mu \in \mathbb{R}_+^{|V|}$, then the notation indicates the $l$th component of $\mu$. However, if $\mu \in \mathbb{R}_+^{|S|}$ for some non-empty $S \subseteq V$, then for $l \in S$, $\mu_l$ or $\mu(l)$ is not necessarily the $l$th component of $\mu$. If $\mu$ is any other type of vector, $\mu_i$ denotes the $i$th component of $\mu$.
We use $e$ to represent the vector
$(1, 1, \ldots, 1)'$. The dimension of the vector $e$ depends on the
context.

The capacity region for the whole graph and the capacity region
for the subset $S \subseteq V$ has the following relationship.

\begin{lemma} \label{lem:subsetcapacity}
An arrival rate vector
$\lambda \in \Lambda$ if and only if for all non-empty $S \subseteq V$,
$[\lambda]_S \in \Lambda_S$. Likewise, $\lambda \in \Lambda^o$ if
and only if for all non-empty $S \subseteq V$, $[\lambda]_S \in
\Lambda^o_S$.
\end{lemma}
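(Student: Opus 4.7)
The proposal is to prove the ``only if'' directions by pushing a witness convex combination from $Co(M_V)$ down to $Co(M_S)$, while the ``if'' directions are immediate by specializing to $S = V$.

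First I would handle the forward direction for $\Lambda$. Suppose $\lambda \in \Lambda$, so there exists $\mu \in Co(M_V)$ with $0 \leq \lambda \leq \mu$, written as $\mu = \sum_{i} \alpha_i m_i$ for maximal schedules $m_i \in M_V$ and convex weights $\alpha_i$. Fix a non-empty $S \subseteq V$. For each $i$, the restriction $[m_i]_S$ is an independent set of $G_S$ (any two nodes chosen in $[m_i]_S$ are also chosen in $m_i$, hence non-adjacent in $G$, hence non-adjacent in $G_S$). The key observation is then that every independent set of $G_S$ can be extended to a maximal independent set of $G_S$, so we can pick $m'_i \in M_S$ with $[m_i]_S \leq m'_i$. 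Setting $\mu' := \sum_i \alpha_i m'_i \in Co(M_S)$ yields
\begin{equation}
[\lambda]_S \leq [\mu]_S = \sum_i \alpha_i [m_i]_S \leq \sum_i \alpha_i m'_i = \mu',
\end{equation}
so $[\lambda]_S \in \Lambda_S$ by definition.

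For the reverse direction, I would simply take $S = V$: the hypothesis that $[\lambda]_S \in \Lambda_S$ for all non-empty $S$ includes $S = V$, which gives $\lambda \in \Lambda$ directly.

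The interior statement requires only a small tweak. If $\lambda \in \Lambda^o$, pick $\mu \in Co(M_V)$ with $0 \leq \lambda < \mu$ componentwise, and construct $\mu' \in Co(M_S)$ exactly as above. Since $[\lambda]_S < [\mu]_S$ (strict inequality is preserved componentwise by restriction) and $[\mu]_S \leq \mu'$, we obtain $[\lambda]_S < \mu'$, so $[\lambda]_S \in \Lambda^o_S$. The converse again follows by taking $S = V$. I do not anticipate a genuine obstacle here; the only substantive fact used is the routine extension of an independent set of $G_S$ to a maximal independent set of $G_S$, which is always possible in a finite graph by greedily adding non-conflicting nodes.
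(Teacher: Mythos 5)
Your proposal is correct and follows essentially the same route as the paper's proof: restrict a dominating vector $\mu \in Co(M_V)$ to $S$ and dominate it by some element of $Co(M_S)$, with the converse obtained by taking $S = V$. The only difference is that you spell out the step the paper labels ``easy to see'' (extending each restricted maximal schedule to a maximal independent set of $G_S$ and passing to the convex combination), which is exactly the right justification.
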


\begin{proof}
Suppose $\lambda \in \Lambda$. Then, $\lambda \leq \mu$ for some
$\mu \in Co(M_V)$. It is easy to see that, for any non-empty subset $S
\subseteq V$, there must exist a vector $\nu \in Co(M_S)$ such
that $[\mu]_S \leq \nu$. Then, $[\lambda]_S \leq \nu$. Hence,
$[\lambda]_S \in \Lambda_S$. The other direction is true by taking
$S=V$. The last statement of the lemma can be proved similarly.
\end{proof}

Throughout, in the statements about rate
regions that involve topological concepts such as open/close sets and
the interior of a set, the space is assumed to be the set of
non-negative real vectors, i.e., $\mathbb{R}_+^{|V|}$. Also, in the
set-complement operation for any rate region, the whole set is
understood to be the non-negative real vectors. For a set $Y \subseteq \mathbb{R}_+^{|V|}$, we let $Y^o$ and $Y^c$ denote the interior and complement of $Y$, respectively.

\eat{

Tassiulas and Ephremides have proved that the interior of the
capacity region can be stabilized by applying the \emph{maximum
weight schedule} (MWS) in each time slot where the weight is the
queue size at each link \cite{Tassiulas98}. That is, under the
technical conditions specified in \cite{Tassiulas98}, the queues
of the wireless network will be stable under the MWS whenever the
average rate vector $\lambda$ of the arrival process is in
$\Lambda^o$. In \cite{Tassiulas98}, stability means the queue
process is a positive recurrent Markov process. However, finding
an MWS is to find a \emph{maximum weight independent set} in the
interference graph, which is NP-hard in general. Even under the
more restricted $k$-hop interference model, finding an MWS is
still NP-hard for $k \geq 2$ \cite{sharma06maxweigted, SMS06}. For
the 1-hop interference model, the problem of finding an MWS
reduces to \emph{maximum weight matching} and the complexity is
$O(|V|^3)$ ({\bf references and make sure this is correct}).

The Longest-Queue-First (LQF) schedule is a simpler algorithm than
the MWS. In the LQF schedule, the links with longer queues are
activated at a higher priority than those with shorter queues,
subject to the interference constraints. The following may be
considered as a reference implementation of this schedule. First,
one of the links with the longest queue is selected to be in the
schedule and all links with which the selected link interferes are
removed from further consideration. Then, the same selection
process repeats over the remaining links yet to be considered
until no links remain to be considered.

}

In the LQF schedule, the links with longer queues are activated at
a higher priority than those with shorter queues, subject to the
interference constraints. The following may be considered as a
reference implementation of this schedule. First, one of the links
with the longest queue is selected to be in the schedule; ties are broken with either an arbitrary deterministic rule or randomly. All
links with which the selected link interferes are removed from
further consideration. Then, the same selection process repeats
over the remaining links yet to be considered until no links
remain to be considered.

\noindent  {\bf Remark: } The following is the key mathematical property about LQF that is used throughout. Suppose, at time $t$, a non-empty set $S \subseteq V$ dominates $V - S$ in the sense that, for any $i \in S$ and any $j \in V - S$, the queue size of $i$ is greater than that of $j$. Then, the schedule used at $t$ must be maximal when restricted to $S$ (i.e., with respect to $G_S$).


\section{Stability Region $\Omega$ under LQF}
\label{sec:omega}

In this section, we introduce a notion of strictly dominating
vectors and construct a region denoted by $\Omega$ based on this
notion. The $\Omega$ region is larger than $\sigma^*(G) \Lambda$
and $\Sigma^*(G) \Lambda$, which have previously been shown to be
stabilizable by the LQF policy. Unlike those previously-discovered
regions of stability, the $\Omega$ region includes all the extreme
points of $\Lambda$ and it is not convex in general.


\subsection{Review of Set, Link and Overall $\sigma$-local Pooling}

Set $\sigma$-local pooling has been studied in \cite{BCY09}. It
has many interesting properties and is related to (overall)
$\sigma$-local pooling defined in \cite{Shroff08}. 


\eat{
Given a non-empty node set $S$, let
\begin{align}
\Theta_S = & \{\sigma \ | \ \sigma \mu_S \ngtr \nu_S, \text{for all
$\mu_S, \nu_S \in Co(M_S)$ } \}. \label{eq:thetaS}
\end{align}
The compliment of $\Theta_S$ is
\begin{align}
\Theta_S^c = & \{\sigma \ | \ \sigma \mu_S > \nu_S, \text{for some
$\mu_S, \nu_S \in Co(M_S)$ } \}. \label{eq:thetaSc}
\end{align}
}

\begin{definition} \label{def:setsigma}
Given a non-empty set of nodes $S \subseteq V$, the {\em set
$\sigma$-local pooling factor for $S$}, denoted by $\sigma_S^*$, is given by
\begin{align}
\sigma_S^* = & \sup \{\sigma \ | \ \sigma \mu \ngtr \nu, \text{for all
$\mu, \nu \in Co(M_S)$ } \}
\label{eq:setsigsup}\\
= & \inf \{\sigma \ | \ \sigma \mu > \nu, \text{for some
$\mu, \nu \in Co(M_S)$ } \}.
\label{eq:setsiginf}
\end{align}
\end{definition}

It has been shown that the set $\sigma$-local pooling factor is
equal to the optimal value of the following problem.
\begin{align}
\sigma_S^* 
= & \min_{\sigma, \mu, \nu} \sigma, \quad \text{subject to} \ \sigma \mu \geq \nu, \ \mu, \nu \in Co(M_S). \label{eq:setsigmapro}
\end{align}

The link $\sigma$-local pooling factor is defined as follows.
\begin{definition} \label{def:sigmaforlink}
The {\em local pooling
factor of a link} $l \in V$, denoted by $\sigma^*_l$, is given by
\begin{align}
\sigma^*_l = & \sup \{\sigma|\sigma \mu \ngtr \nu \text{ for
all } S \subseteq V \text{ such that } l \in S, 
\text{ and all } \mu, \nu \in Co(M_S)\}  \\
= & \inf \{ \sigma | \sigma \mu > \nu \text{ for some $S
\subseteq V$ such that $l \in S$, }
\text{and some } \mu, \nu \in Co(M_S) \}. \label{eq:linksiginf}
\end{align}
\end{definition}

Comparing the definitions of $\sigma^*_S$ and $\sigma^*_l$, we have
\begin{align}
\sigma^*_l=  \min_{\{S \subseteq V \ | \ l \in S\}} \sigma^*_S. \label{eq:lsigset}
\end{align}
The overall $\sigma$-local pooling factor of the graph $G = (V, E)$
is
\begin{align*}
\sigma^*(G) =  \min_{l \in V} \sigma^*_l.
\end{align*}
Let the diagonal matrix $\Sigma^*(G)$ be defined by $\Sigma^*(G)=$
diag $(\sigma^*_l)_{l\in V}$. It has been shown that
$\sigma^*(G) \Lambda$ and $\Sigma^*(G) \Lambda$ are both regions
of stability under LQF \cite{Shroff08} \cite{BCY09},
with the latter containing the former.

\subsection{Strictly Dominating Vectors and $\Omega$ Region}

We first discuss some intuition that leads to the construction of the $\Omega$ region. When the network is unstable, a typical situation is that the size of the longest queues has an overall trend of increase, if one ignores the short-time fluctuations. This would not have occurred if, for any subset $S \subseteq V$, the arrival rate is strictly less than the service rate at each node in $S$. Here, we imagine $S$ is the set of nodes with the longest queues for an extended period of time. Then, over that period of time, the schedule on each time slot must be maximal when restricted to $S$ and, by time sharing of such maximal schedules, the (average) service rate must be in $Co(M_S)$. The discussion motivates us to define the notion of strictly dominating
vectors for a subset of the nodes.



\begin{definition} \label{def:posidomin}
Given a non-empty node set $S \subseteq V$, a vector $\lambda \in \mathbb{R}_+^{|V|}$ is a strictly dominating vector of
$S$ if $[\lambda]_S > \nu$ for {\bf some} $\nu \in Co(M_S)$. The region composed with all
the strictly dominating vectors of $S$ is called the strictly
dominating region of $S$ and is denoted by $\Pi_S$. That is,
\begin{align}
\Pi_S & =\{\lambda \in \mathbb{R}_+^{|V|} \ | \ [\lambda]_S
> \nu, \text{ for some } \nu \in Co(M_S)\}. \nonumber
\end{align}
\end{definition}
For convenience, if $S = \emptyset$, we assume $\Pi_S = \emptyset$.

We are often interested in the complement of $\Pi_S$:
\begin{align}
\Pi_S^c & =\{\lambda \in \mathbb{R}_+^{|V|} \ | \ [\lambda]_S
\ngtr \nu, \text{ for all } \nu \in Co(M_S)\} \nonumber \\
& =\{\lambda \in \mathbb{R}_+^{|V|} \ | \ \text{for every } \nu \in Co(M_S), 
\text{ there exists } l \in S \text{ such that } \lambda(l) \leq \nu(l) \}.\nonumber 
\end{align}

\begin{definition}\label{def:omegaregion}
The $\Omega$ region is defined by
\begin{align}
\Omega &=\bigcap_{S \subseteq V}
 \Pi_S^c. \nonumber
 \end{align}
\end{definition}

\noindent  {\bf Remark.} A vector $\lambda$ is outside $\Omega$ if and only if
$\lambda \in \Pi_S$ for some non-empty node set $S$. Also, when restricted
to the components corresponding to the nodes in $S$, $\Pi_S$ is an
open set (it is a union of open sets). Hence, $\Omega$ is a
closed set. It can also be helpful to think $\Omega =(\bigcup_{S \subseteq V} \Pi_S)^c$.

\eat{

\begin{lemma} \label{lem:omegainter}
Consider an interference graph $G=(V, E)$. Suppose an arrival rate
vector $\lambda \in \Omega^o$. Then, for any subset $S \subseteq
V$ and any $\nu \in Co(M_S)$, there must exist $l \in S$ such that
$\lambda(l)<\nu(l)$.
\end{lemma}
\begin{proof}
Suppose there exist a subset $S \subseteq V$ and $\nu \in Co(M_S)$
such that $\lambda(l) \geq \nu(l)$ for all $l \in S$. Since
$\lambda \in \Omega^o$, $\lambda+\epsilon e \in \Omega$ for some
small enough $\epsilon>0$. Then, $[\lambda+\epsilon
e]_S>[\lambda]_S \geq \nu$. Hence, $\lambda+\epsilon e \in \Pi_S$,
which implies $\lambda+\epsilon e \not \in \Omega$ by Definition
\ref{def:omegaregion}, leading to a contradiction.
\end{proof}

}

\begin{lemma} \label{lem:omegainter}
Suppose an arrival rate
vector $\lambda$ satisfies $\lambda \in \Omega^o$. Then, for any non-empty subset $S \subseteq
V$ and any $\nu \in Co(M_S)$, there exists $l \in S$ such that
$\lambda(l) + \epsilon_o <\nu(l)$, where $\epsilon_o > 0$ is a
constant independent of $S$, $\nu$ and $l$.
\end{lemma}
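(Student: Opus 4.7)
The plan is to exploit the definition of the topological interior directly. Since $\lambda \in \Omega^o$, there exists some $\epsilon > 0$ such that every vector within distance $\epsilon$ of $\lambda$ (in any reasonable norm) still lies in $\Omega$. In particular, the single perturbation $\lambda + \epsilon e$ belongs to $\Omega$, where $e = (1,1,\ldots,1)'$. The key observation is that this one $\epsilon$ is chosen once based only on $\lambda$, and therefore any quantity derived from it will automatically be independent of $S$, $\nu$, and $l$.

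Next I would unwind the definition of $\Omega$. By Definition \ref{def:omegaregion}, membership of $\lambda + \epsilon e$ in $\Omega$ means that $\lambda + \epsilon e \in \Pi_S^c$ for every non-empty $S \subseteq V$. By the characterization of $\Pi_S^c$ recorded just before Definition \ref{def:omegaregion}, this means: for every $\nu \in Co(M_S)$, there exists some index $l \in S$ (depending on $\nu$ and $S$) such that
\begin{align*}
\lambda(l) + \epsilon \;=\; (\lambda + \epsilon e)(l) \;\leq\; \nu(l).
\end{align*}

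Finally, to convert the weak inequality into a strict one with room to spare, I would set $\epsilon_o = \epsilon/2$. Then for the same $l$ produced above,
\begin{align*}
\lambda(l) + \epsilon_o \;=\; \lambda(l) + \tfrac{\epsilon}{2} \;<\; \lambda(l) + \epsilon \;\leq\; \nu(l),
\end{align*}
which gives the required bound. Since $\epsilon_o$ depends only on the distance from $\lambda$ to the boundary of $\Omega$ (specifically, only on $\epsilon$), it is indeed uniform in $S$, $\nu$, and $l$.

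There is essentially no obstacle here: the only subtlety is to choose the perturbation direction so that \emph{all} coordinates of $\lambda$ are increased simultaneously (which the vector $e$ achieves), so that whichever index $l$ is produced by the defining property of $\Pi_S^c$ automatically satisfies $\lambda(l) + \epsilon \leq \nu(l)$. Using a perturbation that only increased a single coordinate would not suffice, because the index $l$ promised by $\Pi_S^c$ is not under our control.
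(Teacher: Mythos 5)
Your proof is correct and is essentially the paper's argument: both hinge on the single fact that $\lambda+\epsilon e\in\Omega$ for some $\epsilon>0$ and then unwind the characterization of $\Pi_S^c$, the only difference being that you argue directly (halving $\epsilon$ to get strictness) while the paper phrases the same step as a contradiction. The uniformity of $\epsilon_o$ in $S$, $\nu$, and $l$ is handled the same way in both.
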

\begin{proof}
Since $\lambda \in \Omega^o$, we have $\lambda+\hat{\epsilon} e \in
\Omega$ for some small enough $\hat{\epsilon} > 0$. Suppose the
conclusion of the lemma is not true. That is, suppose for any
$\epsilon > 0$, there exists a non-empty subset $S \subseteq V$ and $\nu \in
Co(M_S)$ such that $\lambda(l) + \epsilon \geq \nu(l)$ for all $l
\in S$. We can choose $\epsilon$ satisfying $0 < \epsilon < \hat{\epsilon}$. Then,
$[\lambda+\hat{\epsilon} e]_S > [\lambda + \epsilon e]_S \geq
\nu$. Hence, $\lambda+\hat{\epsilon} e \in \Pi_S$, which implies
$\lambda + \hat{\epsilon} e \not \in \Omega$ by Definition
\ref{def:omegaregion}, leading to a contradiction.
\end{proof}

\subsection{Performance Guarantee of LQF in $\Omega$ Region}

\eat{
The following assumptions will be needed later (see \cite{DW06}
for their relevance).

\noindent {\bf A1:} The arrival processes are i.i.d. and mutually independent.

\noindent {\bf A2:} (The large deviation bound on the arrival
processes) Let $A_k(n)$ be the cumulative arrivals at queue $k$ up
to time $n$, and let $\lambda_k$ be the average arrival rate at
queue $k$. For each $\epsilon>0$,
\begin{align*}
 & P(|\frac{A_k(n)}{n}-\lambda_k|>\epsilon) \leq \beta
  \exp(-n\gamma(\epsilon)) \text{ for all } n \geq 1,\\
  &\text{for some } \gamma(\epsilon)>0 \text{ and } \beta>0.
\end{align*}
 
}

\begin{theorem} \label{thm:omegastable}
If an arrival rate vector
$\lambda$ satisfies $\lambda \in \Omega^o$, then, the network is
stable under the LQF policy.
\end{theorem}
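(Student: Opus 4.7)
The plan is to proceed via a fluid limit argument: build fluid limits of the queue process under LQF, show that they drain to zero in finite time whenever $\lambda \in \Omega^o$, and then invoke the standard fluid-stability criterion (\`a la Dai/Meyn) to conclude positive recurrence of the underlying Markov chain. Specifically, I would rescale as $q^{(n)}(t) = Q(nt)/n$ and extract, along a subsequence, a Lipschitz fluid limit $q(t)$ obeying $q_k(t) = q_k(0) + \lambda_k t - s_k(t)$, where the cumulative service $s_k(\cdot)$ is Lipschitz (hence differentiable at a.e.\ $t$) and $\dot{s}(t) \in Co(M_V)$ at every point of differentiability; the i.i.d.\ arrival assumption yields the linear arrival term in the limit by the functional SLLN.

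At a regular time $t$ at which every $s_k$ is differentiable, let $V(t) = \max_k q_k(t)$, and let $S^*(t)$ denote the ``sustained longest-queue set'': those $k$ for which $q_k(t) = V(t)$ and this tie persists throughout a right-neighborhood of $t$. This set is non-empty and well defined at a.e.\ $t$ with $V(t) > 0$. The key structural step is to argue that $[\dot{s}(t)]_{S^*(t)} \in Co(M_{S^*(t)})$. To see this, on a short right-interval $(t, t + \delta)$ the pre-limit queues indexed by $S^*(t)$ strictly dominate all other queues (by continuity and the very definition of $S^*(t)$), so by the LQF remark in Section~\ref{sec:preliminaries} the instantaneous schedule at each such slot is maximal on $G_{S^*(t)}$; averaging these maximal schedules over $(t, t+\delta)$ and sending $\delta \to 0^+$ places the resulting service-rate vector in the convex hull $Co(M_{S^*(t)})$. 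Moreover, for every $k \in S^*(t)$ we must have $\dot{q}_k(t) = V'(t)$, because any $k$ with a strictly smaller derivative would drop out of the max set immediately.

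Applying Lemma~\ref{lem:omegainter} with $S = S^*(t)$ and $\nu = [\dot s(t)]_{S^*(t)}$ produces some $l \in S^*(t)$ with $\lambda_l + \epsilon_o < \dot{s}_l(t)$, whence
\begin{equation*}
V'(t) \;=\; \dot{q}_l(t) \;=\; \lambda_l - \dot{s}_l(t) \;<\; -\epsilon_o
\end{equation*}
at every regular $t$ with $V(t) > 0$, where $\epsilon_o$ is the uniform constant supplied by the lemma. Since $V$ is absolutely continuous, this forces $V(t) \le V(0) - \epsilon_o t$ as long as $V(t) > 0$, so the fluid drains by time $V(0)/\epsilon_o$. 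Fluid drainage from arbitrary bounded initial conditions then yields positive recurrence of the queue process under LQF.

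The main obstacle is the middle step: rigorously establishing the LQF fluid property $[\dot s(t)]_{S^*(t)} \in Co(M_{S^*(t)})$. The delicate point is that ties at the fluid level correspond, in the pre-limit, to queues that are close but not equal; one must show that the queues in $V \setminus S^*(t)$ stay uniformly below the queues in $S^*(t)$ on a window of time of order $n\delta$ in the pre-limit, so that the remark about maximality on $G_{S^*(t)}$ genuinely applies at every such slot. Once this uniform separation and the associated convex-hull membership are secured, the rest of the argument, including the uniform drift bound from Lemma~\ref{lem:omegainter}, is essentially automatic.
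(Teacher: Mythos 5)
Your proposal follows essentially the same route as the paper's own (sketched) proof: pass to the fluid limit, identify the set of longest queues that remain longest over a right-neighborhood, argue that the LQF service rate restricted to that set lies in its convex hull of maximal schedules, and then invoke Lemma~\ref{lem:omegainter} to obtain a uniform negative drift $\epsilon_o$ for the maximum queue, concluding positive recurrence by the standard fluid-stability criterion. You even flag as the main technical obstacle exactly the step the paper itself defers to the arguments of Dimakis and Walrand, so the two arguments coincide in both structure and in what they leave to be filled in.
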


The full proof requires replicating most of the arguments in
\cite{DW06}. In the following, we only highlight the part of the
argument that needs modification.
\begin{proof}[Sketch of Proof]
Consider the fluid limit of the queue processes, denoted by
$\{q_l(t)\}_{t \geq 0}$ for all $l \in V$. For a fixed (and
regular) time instance $t$, let $S$ be the set of those longest
queues whose time derivatives at $t$, $\dot{q}_l(t)$, are the
largest under a given LQF policy instance. The queues in $S$ will remain
the longest with identical length in the next infinitesimally
small time interval. 

The service rate vector, when restricted to $S$, must
belong to the set $Co(M_S)$. Roughly, this is because $S$ contains all the queues that are the longest and remain the longest in the
near future, and hence, as remarked earlier, every LQF schedule being used must be a
maximal schedule when restricted to $S$.



Now imagine $\nu$ is the service rate vector for the nodes in $S$
at time t. Since $\lambda \in \Omega^o$, by
Lemma \ref{lem:omegainter}, there exists a link $l
\in S$ such that $\lambda(l) + \epsilon_o < \nu(l)$ for some constant $\epsilon_o > 0$.
Then, $\nu(l) - \lambda(l)> \epsilon_o$. Hence, at any time
instance, each of the longest queues decreases at a positive rate
no less than $\epsilon_o$. This is sufficient to conclude that the
original queueing process is a positive recurrent Markov process (see \cite{Dai95}),
which means the queues are stable.
\end{proof}

Next, we show $\Omega$ contains the previously-known regions of stability for LQF.
\begin{lemma} \label{lem:matrixomega}
The following holds: $\Sigma^*(G) \Lambda \subseteq \Omega$.
\end{lemma}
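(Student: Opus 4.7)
The plan is to prove $\Sigma^*(G)\Lambda \subseteq \Omega$ by contradiction. Using the characterization $\Omega = \bigcap_{S \subseteq V} \Pi_S^c$, it suffices to show that if $\lambda \in \Sigma^*(G)\Lambda$ then $\lambda \notin \Pi_S$ for every non-empty $S \subseteq V$. I would assume the contrary, extract a pair $\mu, \nu \in Co(M_S)$ with $\sigma^*_S \mu > \nu$, and then use a slight decrease in the scalar to contradict the infimum definition of $\sigma^*_S$ in (\ref{eq:setsiginf}).

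The derivation proceeds in three steps. First, unpack the hypothesis: $\lambda \in \Sigma^*(G)\Lambda$ means there is $\lambda' \in \Lambda$ with $\lambda(l) = \sigma^*_l \lambda'(l)$ for all $l \in V$. Second, fix a non-empty $S$ and suppose, for contradiction, $\lambda \in \Pi_S$, so that $[\lambda]_S > \nu$ for some $\nu \in Co(M_S)$. By Lemma \ref{lem:subsetcapacity}, $[\lambda']_S \in \Lambda_S$, hence $[\lambda']_S \leq \mu$ for some $\mu \in Co(M_S)$; chaining the inequalities yields
\begin{align*}
\sigma^*_l \mu(l) \;\geq\; \sigma^*_l \lambda'(l) \;=\; \lambda(l) \;>\; \nu(l), \quad l \in S.
\end{align*}
Third, use (\ref{eq:lsigset}), which gives $\sigma^*_l \leq \sigma^*_S$ for every $l \in S$; together with $\mu(l) \geq 0$ this upgrades the chain to $\sigma^*_S \mu(l) \geq \sigma^*_l \mu(l) > \nu(l)$, i.e., $\sigma^*_S \mu > \nu$ componentwise on $S$.

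The main obstacle is the last step: the inequality $\sigma^*_S \mu > \nu$ at the value $\sigma = \sigma^*_S$ itself does not yet contradict the infimum; I need a scalar strictly below $\sigma^*_S$ that still realizes the strict inequality. The way out is a positivity observation: from $\sigma^*_l > 0$ (since $\sigma^*_l \geq \sigma^*(G) > 0$) and $\sigma^*_l \mu(l) > \nu(l) \geq 0$ we conclude $\mu(l) > 0$ for every $l \in S$. Because $S$ is finite and each component-wise strict inequality $\sigma^*_S \mu(l) > \nu(l)$ involves a strictly positive coefficient $\mu(l)$, the inequality $\sigma \mu > \nu$ persists for some $\sigma < \sigma^*_S$ sufficiently close to $\sigma^*_S$, producing a pair $\mu, \nu \in Co(M_S)$ and a scalar $\sigma < \sigma^*_S$ with $\sigma \mu > \nu$. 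This contradicts (\ref{eq:setsiginf}) and completes the proof. The only subtlety to watch is ensuring the contradiction is drawn for the same $S$ and the same pair $(\mu, \nu)$ throughout, which the construction above does.
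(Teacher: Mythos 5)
Your proof is correct and follows essentially the same route as the paper's: both reduce to showing $[\lambda]_S \leq \sigma^*_S \mu$ for some $\mu$ dominating $[\lambda']_S$ via Lemma \ref{lem:subsetcapacity} and the relation $\sigma^*_l \leq \sigma^*_S$ from (\ref{eq:lsigset}), and then invoke the defining property of $\sigma^*_S$ to rule out $[\lambda]_S > \nu$. The only difference is cosmetic: you argue by contradiction and explicitly justify (via the positivity of $\mu(l)$ and a small decrease of $\sigma$) that strict domination at $\sigma = \sigma^*_S$ would violate the infimum in (\ref{eq:setsiginf}), a step the paper absorbs into a direct appeal to Definition \ref{def:setsigma}.
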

\begin{proof}
Consider any vector $\lambda \in \Sigma^*(G) \Lambda$. Let $S
\subseteq V$ be an arbitrary non-empty node set. Let $n \in \arg \max_{k \in
S} \sigma^*_k$. Since $\lambda \subseteq \Sigma^*(G) \Lambda$, by Lemma \ref{lem:subsetcapacity} and (\ref{eq:lsigset}),
$[\lambda]_S \in \sigma^*_n \Lambda_S \subseteq \sigma^*_S
\Lambda_S$. Then, there exists a vector $\mu \in \Lambda_S$ such
that $[\lambda]_S \leq \sigma^*_S \mu$. According to Definition
\ref{def:setsigma}, $\sigma^*_S \mu \not > \nu$ for any $\nu \in
Co(M_S)$. Hence, $[\lambda]_S \not
> \nu$ for any $\nu \in Co(M_S)$, implying $\lambda \in \Pi_S^c$. Since
$S$ is chosen arbitrarily, $\lambda \in \bigcap_{S \subseteq V}
\Pi_S^c=\Omega$ by Definition \ref{def:omegaregion}.
\end{proof}

\subsection{Shape of $\Omega$ Region}

The previously-known regions of stability under LQF, such as $\Sigma^*(G) \Lambda $, are derived by
reducing the capacity region through a linear transformation.
Since the capacity region $\Lambda$ is convex, each of these derived stability regions
is also convex. In contrast, we will show that the shape of the $\Omega$ region is
not convex in general. Furthermore, when the previously-known regions are not identical to $\Lambda$, they exclude many, if not most, of the extreme points of $\Lambda$. We will show that $\Omega$ contains all the extreme points of $\Lambda$.


\eat{

\begin{lemma}
Given a graph $G=(V, E)$, let $\sum^*(G)=$ diag $(\sigma^*_l)_{l\in
V}$. The region $\sum^*(G)\Lambda$ is convex.
\end{lemma}
\begin{proof}
Suppose $\mu, \nu \in \sum^*(G)\Lambda$. Then, there must exist an
$\alpha, \beta \in \Lambda$ such that $\mu=\sum^*(G) \alpha$ and
$\nu=\sum^*(G) \beta$. For any $0 \leq f \leq 1$, $f \mu+ (1-f) \nu=
f \sum^*(G) \alpha +(1-f) \sum^*(G) \beta= \sum^*(G) [f
\alpha+(1-f)\beta]$. Since $\Lambda$ is convex, $f \alpha+(1-f)\beta
\in \Lambda$ which implies that $f \mu+ (1-f) \nu \in
\sum^*(G)\Lambda$.
\end{proof}

}


\begin{lemma} \label{indepextrem}
The set of all the independent
sets of the interference graph $G$, i.e., the set of all the feasible schedules, has a bijection to the set of all the extreme points of $\Lambda$.
\end{lemma}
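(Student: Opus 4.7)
The plan is to identify the extreme points of $\Lambda$ with the $0$-$1$ indicator vectors of the independent sets of $G$. Since a feasible schedule is, by definition, the indicator vector of an independent set, the natural candidate bijection is the map sending an independent set $I \subseteq V$ to its indicator vector $\chi_I \in \{0,1\}^{|V|}$. This map is manifestly injective, so what remains is to show that its image coincides with the set of extreme points of $\Lambda$.

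The intended route first establishes the auxiliary identity $\Lambda = \mathrm{conv}\{\chi_I \ | \ I \text{ independent in } G\}$. The inclusion $\supseteq$ is immediate: any independent set $I$ extends to some maximal independent set $M$, so $\chi_I \leq \chi_M \in M_V$, placing $\chi_I$ in $\Lambda$ by (\ref{eq:capregion}); convexity of $\Lambda$ (which is straightforward to check) then covers the convex hull. For the reverse inclusion, given $\lambda \in \Lambda$ with witness $\mu = \sum_j \alpha_j \chi_{M_j} \in Co(M_V)$ satisfying $\lambda \leq \mu$, I would iteratively absorb the slack $\mu_l - \lambda_l$ one coordinate at a time. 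For a coordinate $l$ with $\mu_l > \lambda_l$, split each weight $\alpha_j$ (for $j$ such that $l \in M_j$) as $\alpha_j = \alpha_j' + \alpha_j''$ with $\sum_{j : l \in M_j} \alpha_j'' = \mu_l - \lambda_l$, and substitute the piece $\alpha_j \chi_{M_j}$ by $\alpha_j' \chi_{M_j} + \alpha_j'' \chi_{M_j \setminus \{l\}}$. Since $M_j \setminus \{l\}$ is still independent, the result is again a convex combination of indicators of independent sets; its $l$-th component drops exactly to $\lambda_l$, while every other component is unchanged. Iterating over all coordinates exhausts the slack and represents $\lambda$ itself as a convex combination of indicators of independent sets.

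Given this identity, every extreme point of $\Lambda$ must be one of the finitely many $\chi_I$. For the converse direction I would use the fact $\Lambda \subseteq [0,1]^{|V|}$, which follows because $Co(M_V) \subseteq [0,1]^{|V|}$ and $\lambda \leq \mu$. If $\chi_I = t \lambda_1 + (1-t) \lambda_2$ with $\lambda_1, \lambda_2 \in \Lambda$ and $t \in (0,1)$, then at each coordinate $l \in I$ the equation $1 = t \lambda_1(l) + (1-t) \lambda_2(l)$ together with $\lambda_i(l) \in [0,1]$ forces $\lambda_1(l) = \lambda_2(l) = 1$; at each $l \notin I$ the equation $0 = t \lambda_1(l) + (1-t) \lambda_2(l)$ together with $\lambda_i(l) \geq 0$ forces $\lambda_1(l) = \lambda_2(l) = 0$. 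Hence $\lambda_1 = \lambda_2 = \chi_I$, so $\chi_I$ is extreme.

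The main obstacle is the bookkeeping in the decomposition step of the second paragraph: one must verify that after each coordinate is processed the representation remains a valid convex combination (nonnegative weights summing to one) and that the newly appearing sets $M_j \setminus \{l\}$ are independent (automatic, but worth stating explicitly). The remaining steps---injectivity of $I \mapsto \chi_I$, the containment $\Lambda \subseteq [0,1]^{|V|}$, and the coordinatewise extremality argument---are routine.
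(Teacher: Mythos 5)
Your proposal is correct, and it splits into two halves: the half showing each indicator $\chi_I$ is an extreme point (the coordinatewise argument exploiting $\Lambda \subseteq [0,1]^{|V|}$) is essentially identical to the paper's. The other half — showing every extreme point is such an indicator — takes a genuinely different route. The paper works directly with an arbitrary extreme point $\nu \leq \mu \in Co(M_V)$ and applies two local arguments: a perturbation along one coordinate to show $\nu_i \in \{0, \mu_i\}$ for each $i$, and then a decomposition of $\mu$ into maximal schedules to show the nonzero coordinates must in fact equal $1$, concluding that the support of $\nu$ is an independent set because it sits inside the common support of the maximal schedules appearing in $\mu$. You instead first establish the V-representation $\Lambda = \mathrm{conv}\{\chi_I \mid I \text{ independent}\}$ via your slack-absorption redistribution (which is sound: the weight splits $\alpha_j = \alpha_j' + \alpha_j''$ exist because $\mu_l - \lambda_l \leq \sum_{j: l \in M_j}\alpha_j$, each step preserves a convex combination of indicators of independent sets, fixes coordinate $l$ at $\lambda_l$, and leaves other coordinates untouched), and then invoke the standard fact that the extreme points of the convex hull of a finite set lie in that set. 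Your route proves a stronger intermediate statement — the explicit identification of $\Lambda$ as the independent-set polytope of $G$ — which is reusable elsewhere, at the cost of the bookkeeping in the redistribution step; the paper's argument is more local and avoids constructing the full convex decomposition, but yields only the extreme-point characterization itself.
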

Note that we consider the empty schedule where no link is activated a trivial independent set. Lemma
\ref{indepextrem} establishes a connection between the graph
topology and the geometry of $\Lambda$ in a vector space. The proof
can be found in the Appendix.

\eat{
(see Fig. \ref{fig:extrindep}). The proof can be found in the
Appendix.

\begin{figure}[htbp]
  \begin{center}
    \includegraphics[width=2.8in]{fig/extremindep.eps}
  \end{center}
  \caption{Relationship between graph independent sets and extreme points of the capacity region. Left:
  the graph $G$; Right: the capacity region.}
  \label{fig:extrindep}
\end{figure}
}
\begin{lemma} \label{lem:phiids}
Suppose $\lambda$ is a vector
corresponding to an independent set of the interference graph $G$. Then, $\lambda \in
\Omega$.
\end{lemma}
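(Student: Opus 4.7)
The plan is to show directly that $\lambda \in \Pi_S^c$ for every non-empty $S \subseteq V$, i.e., that for every $\nu \in Co(M_S)$ there is some $l \in S$ with $\lambda(l) \leq \nu(l)$. Let $I \subseteq V$ denote the independent set that $\lambda$ encodes, so $\lambda(l) = 1$ if $l \in I$ and $\lambda(l) = 0$ otherwise.

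I would split the argument into two cases according to the relationship between $S$ and $I$. In the first case, suppose $S \not\subseteq I$. Then pick any $l \in S \setminus I$; here $\lambda(l) = 0$, while $\nu(l) \geq 0$ because $\nu$ is a convex combination of $0/1$ vectors. So $\lambda(l) \leq \nu(l)$, which kills strict dominance on $S$ for every $\nu \in Co(M_S)$. This case is immediate.

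The second case, $S \subseteq I$, is the one that really uses the hypothesis. Since $I$ is independent in $G$ and $S \subseteq I$, no two nodes of $S$ are joined by an edge of $G$, so the induced subgraph $G_S$ has no edges at all. Consequently every node of $S$ must be activated in any maximal schedule of $G_S$, so $M_S$ consists of the single vector $e \in \mathbb{R}^{|S|}$ (all ones), and $Co(M_S) = \{e\}$. Any $\nu \in Co(M_S)$ therefore satisfies $\nu(l) = 1 = \lambda(l)$ for each $l \in S$, so again $[\lambda]_S \not> \nu$.

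Combining both cases, $\lambda \in \Pi_S^c$ for every non-empty $S \subseteq V$ (and $\Pi_\emptyset = \emptyset$ by convention), hence $\lambda \in \bigcap_{S \subseteq V} \Pi_S^c = \Omega$ by Definition \ref{def:omegaregion}. There is no serious obstacle here; the only thing to notice is that independence of $I$ forces $G_S$ to be edgeless whenever $S \subseteq I$, which pins down $Co(M_S)$ exactly and makes the comparison $[\lambda]_S$ versus $\nu$ trivial to resolve.
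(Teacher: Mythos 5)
Your proof is correct and follows essentially the same route as the paper's: verify $\lambda \in \Pi_S^c$ for every non-empty $S$ by exploiting that $\lambda$ restricted to $S$ is an independent set of $G_S$. The paper simply asserts that an independent-set vector cannot strictly dominate any $\nu \in Co(M_S)$, whereas you supply the justification via the case split ($S \not\subseteq I$ gives a zero coordinate; $S \subseteq I$ forces $G_S$ edgeless and $Co(M_S)=\{e\}$), which is exactly the detail the paper leaves implicit.
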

\begin{proof}
$[\lambda]_S$ is an independent set of the node-induced subgraph
$G_S$ for any non-empty $S \subseteq V$. Then, $[\lambda]_S \not >
\nu$ for any $\nu \in Co(M_S)$, which implies $\lambda \not \in
\Pi_S$. Since $S$ is arbitrary, we must have $\lambda \in \Omega$.
\end{proof}

\begin{corollary} \label{cor:extpt}
All the extreme points of the capacity
region $\Lambda$ belong to $\Omega$.
\end{corollary}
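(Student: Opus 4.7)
The plan is to observe that the corollary is essentially an immediate composition of the two preceding lemmas. First I would take an arbitrary extreme point $\lambda$ of the capacity region $\Lambda$. By Lemma~\ref{indepextrem}, the extreme points of $\Lambda$ are in bijection with the independent sets of the interference graph $G$, so $\lambda$ is the $0$-$1$ indicator vector of some independent set $I$ of $G$ (including the trivial case $I=\emptyset$, which corresponds to $\lambda=0$).

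Having identified $\lambda$ as the indicator vector of an independent set of $G$, I would invoke Lemma~\ref{lem:phiids} directly: it asserts exactly that any such vector lies in $\Omega$. Therefore $\lambda \in \Omega$, and since $\lambda$ was an arbitrary extreme point of $\Lambda$, every extreme point of $\Lambda$ belongs to $\Omega$.

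There is no real obstacle here, since both ingredients have been established: Lemma~\ref{indepextrem} supplies the combinatorial description of the extreme points, and Lemma~\ref{lem:phiids} provides membership in $\Omega$ for any independent-set vector via the observation that the restriction $[\lambda]_S$ to any non-empty $S \subseteq V$ is itself an independent set of $G_S$, so it cannot be strictly dominated by any convex combination of maximal schedules of $G_S$. Thus the proof is a one-line composition, and the only thing worth emphasizing is that the bijection in Lemma~\ref{indepextrem} covers all extreme points, so no extreme point is missed.
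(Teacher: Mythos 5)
Your proof is correct and is exactly the paper's argument: the corollary is the immediate composition of Lemma~\ref{indepextrem} (extreme points of $\Lambda$ are precisely the independent-set indicator vectors) with Lemma~\ref{lem:phiids} (every such vector lies in $\Omega$). Nothing further is needed.
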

\begin{proof}
This is a result of Lemma \ref{indepextrem} and
Lemma \ref{lem:phiids}.
\end{proof}


%

\begin{figure}[htbp]
  \begin{center}
    \includegraphics[width=1.0in]{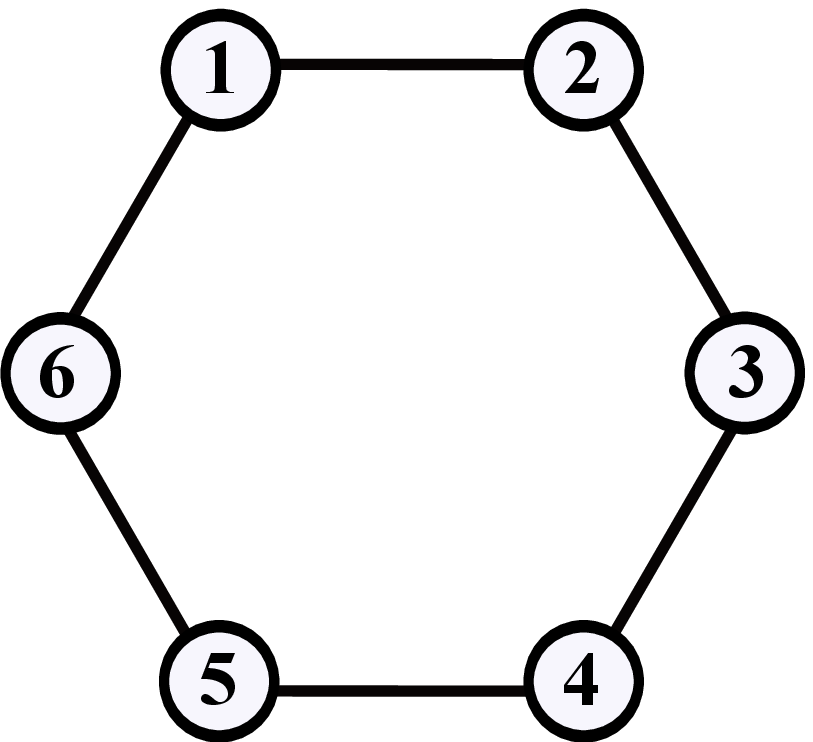}
  \end{center}
  \caption{The six-cycle graph, $C_6$.}
  \label{fig:sixcycle}
\end{figure}

As an example, let $G$ be the six cycle graph in Fig.
\ref{fig:sixcycle}. The arrival rate vector
$\lambda=(1,0,1,0,1,0)'$ corresponds to an independent set, and
hence, $\lambda \in \Omega$. However, we know that
$\Sigma^*(G)=\text{diag}(2/3,$ $2/3, 2/3, 2/3, 2/3, 2/3)$. As a
result, $\lambda \not \in \Sigma^*(G) \Lambda$. The example shows
that $\Omega$ can be strictly larger than $\Sigma^*(G) \Lambda$.
In the example, $\Omega - \Sigma^*(G)\Lambda$ contains
not only the extreme points. For instance, one can check that, for
$\lambda=(7/10, 1/10, 7/10, 1/10, 7/10, 1/10)'$, $\lambda \in
\Omega$ but $\lambda \not \in \Sigma^*(G) \Lambda$.


The next example shows that, the previously-discovered stability regions $\sigma^*(G) \Lambda$ and $\Sigma^*(G)
\Lambda$ can underestimate the performance of LQF by an
arbitrarily large factor in certain directions and in certain cases, whereas $\Omega$ can avoid such poor estimates.

\begin{lemma} \label{lem:arbitraryimprove}
For any $k > 0$, there exists an interference graph
$G=(V, E)$ and an arrival rate vector $\lambda$ such that $\lambda
\notin \Sigma^*(G) \Lambda$, but $k \lambda \in \Omega$.
\end{lemma}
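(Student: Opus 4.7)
The plan is to leverage Lemma \ref{lem:phiids}---which guarantees that the incidence vector of any independent set of $G$ lies in $\Omega$---together with a graph construction whose link local pooling factor $\sigma^*_l$ can be driven arbitrarily close to $0$. Given $k > 0$, I will exhibit an interference graph $G = (V, E)$ and a node $l \in V$ with $\sigma^*_l(G) < 1/k$, and take $\lambda = (1/k)\, e_l$, where $e_l$ is the indicator vector of the singleton independent set $\{l\}$.

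Once such a graph is in hand, both required properties follow quickly. Since $k\lambda = e_l$ is the incidence vector of the independent set $\{l\}$, Lemma \ref{lem:phiids} gives $k\lambda \in \Omega$. For the other property, I argue by contradiction: if $\lambda \in \Sigma^*(G)\Lambda$, then $\lambda = \Sigma^*(G)\mu$ for some $\mu \in \Lambda$, which forces $\mu_l = \lambda_l/\sigma^*_l = 1/(k\sigma^*_l) > 1$ by the choice $\sigma^*_l < 1/k$. However, every $\mu \in \Lambda$ satisfies $\mu_l \leq 1$, because a single link's unit capacity forces $\Lambda \subseteq [0,1]^{|V|}$. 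This contradiction yields $\lambda \notin \Sigma^*(G)\Lambda$.

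The only real obstacle is producing, for any prescribed $k$, a graph with $\sigma^*_l < 1/k$. My plan is to use the crown-graph family $G_n = K_{n,n} - M$, obtained from the complete bipartite graph $K_{n,n}$ on bipartition $\{a_1, \ldots, a_n\} \cup \{b_1, \ldots, b_n\}$ by deleting the perfect matching $M = \{(a_i, b_i) : 1 \leq i \leq n\}$. A short case analysis shows that, for $n \geq 3$, the maximal independent sets of $G_n$ are exactly the two partition classes $\{a_1, \ldots, a_n\}$ and $\{b_1, \ldots, b_n\}$ together with the $n$ diagonal pairs $\{a_i, b_i\}$. Averaging the two partition classes gives $\mu = (1/2)\, e \in Co(M_V)$, while averaging the $n$ diagonals gives $\nu = (1/n)\, e \in Co(M_V)$; the constraint $\sigma\mu \geq \nu$ reduces to $\sigma \geq 2/n$, so by the LP characterization in (\ref{eq:setsigmapro}), $\sigma^*_V(G_n) \leq 2/n$. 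Since $\sigma^*_l \leq \sigma^*_V$ for every $l$ by (\ref{eq:lsigset}), choosing any integer $n > 2k$ yields $\sigma^*_l \leq 2/n < 1/k$, completing the construction. The $n = 3$ case is exactly the six-cycle of Fig.~\ref{fig:sixcycle}, so this family is a natural extension of the $C_6$ example already in the paper.
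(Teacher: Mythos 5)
Your proof is correct and follows essentially the same route as the paper: the paper's own construction is the same crown-graph family ($K_{N,N}$ minus a perfect matching, its ``almost complete bipartite graph'') with a rate vector supported on a single node, namely $\lambda=(2/N+\epsilon,0,\dots,0)'$ chosen so that $k\lambda$ is dominated by an independent-set vector and hence lies in $\Omega$. The only real difference is that you explicitly derive the needed bound $\sigma^*_l\le 2/n$ from (\ref{eq:setsigmapro}) and (\ref{eq:lsigset}), whereas the paper simply asserts $\Sigma^*(G)=\mathrm{diag}(2/N,\dots,2/N)$ as ``easy to check.''
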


\begin{proof}
Consider the bipartite graph in Fig.
\ref{fig:almost.complete.bipartite} with $N$ pairs of nodes, where $N=4$ in this particular case. It is almost a complete
bipartite graph except that every corresponding pair of nodes (such as nodes 1
and 2) does not have an edge between them. It is easy to check that
$\Sigma^*(G)=\text{diag}(2/N, 2/N,..., 2/N)$.
Therefore, the rate vector $\lambda=(2/N+\epsilon, 0, 0,..., 0)'$, where $\epsilon > 0$, is not
in $\Sigma^*(G) \Lambda$.
For any $k>0$, we can find a large enough $N$ and a small enough $\epsilon$ such that $k(2/N+\epsilon)\leq 1$. Then, we have $k
\lambda=(k(2/N+\epsilon), 0, 0,...,0)'$ in $\Omega$.
\end{proof}

\begin{figure}[htbp]
  \begin{center}
    \includegraphics[width=0.8in]{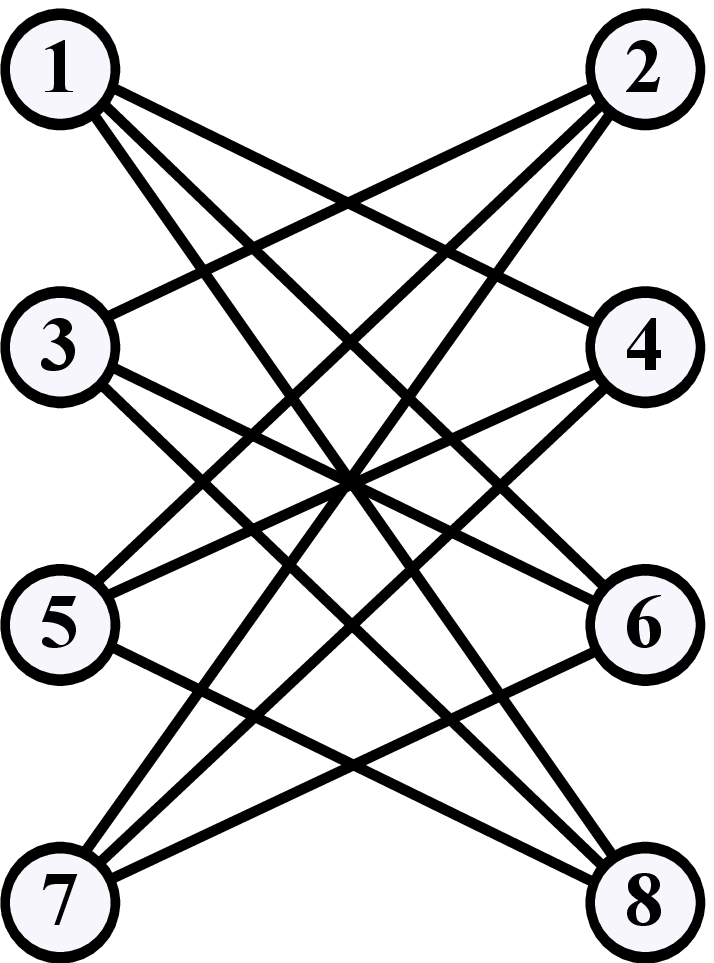}
  \end{center}
  \caption{A bipartite graph.}
  \label{fig:almost.complete.bipartite}
\end{figure}

\begin{figure}[htbp]
  \begin{center}
    \includegraphics[width=2in]{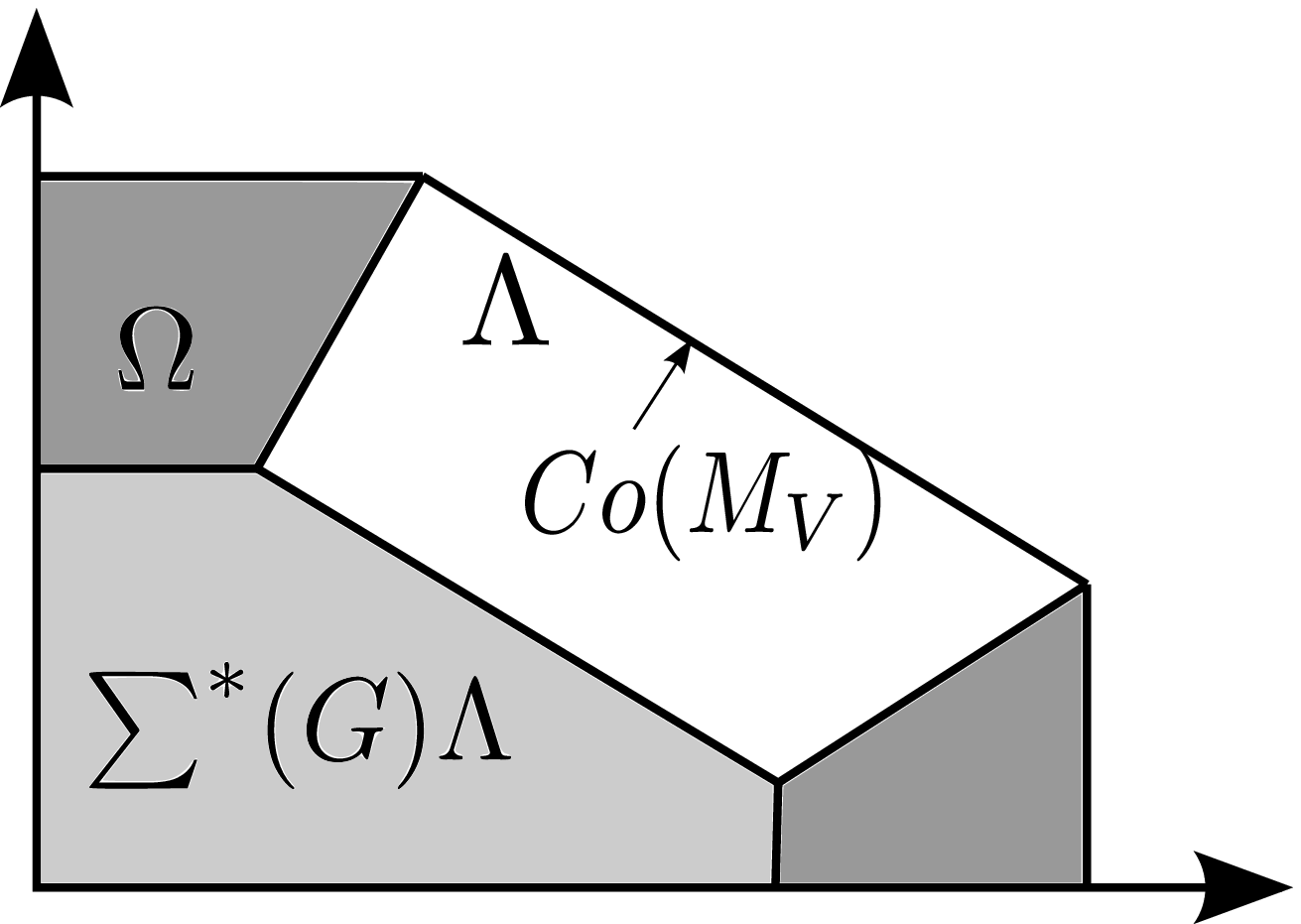}
  \end{center}
  \caption{The $\Omega$ region and other relevant regions. The largest convex polytope is $\Lambda$. The entire shaded region is $\Omega$, which is not a convex set.}
  \label{fig:omega}
\end{figure}

Though we cannot draw various regions in a high-dimensional vector
space, it may still be helpful to make a highly simplified
illustration with Fig. \ref{fig:omega}. The whole capacity region
$\Lambda$ is convex. The region $\Sigma^*(G) \Lambda$ is derived by
scaling down the capacity region $\Lambda$ using the diagonal matrix
$\Sigma^*(G)$. This sort of scaling
usually cuts off many or most extreme points of $\Lambda$. The newly
defined stability region $\Omega$ is a superset of $\Sigma^*(G)
\Lambda$ and $\Omega$ contains all the extreme points of $\Lambda$.
The figure makes the point that $\Omega$ is not convex in general.
We next show $\Omega$ is convex if and only if it is equal to
$\Lambda$.

\begin{lemma} \label{lem:convexcond}
The following statements are equivalent.

1. $\Omega$ is a convex.

2. $G$ is an overall local pooling graph.

3. $\Omega=\Lambda$.
\end{lemma}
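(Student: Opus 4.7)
My plan is to prove the implications $(3) \Rightarrow (1)$, $(1) \Rightarrow (3)$, $(2) \Rightarrow (3)$ and $(3) \Rightarrow (2)$, which together give the three-way equivalence. The first is immediate since $\Lambda$ is convex. For $(1) \Rightarrow (3)$, the essential input is Corollary \ref{cor:extpt}: $\Omega$ always contains every extreme point of the bounded polytope $\Lambda$. If $\Omega$ is convex, it must contain the convex hull of those extreme points, which is all of $\Lambda$, so $\Omega \supseteq \Lambda$. To upgrade this to equality I need the reverse inclusion $\Omega \subseteq \Lambda$, which is the technical heart of the argument and is also what turns $(2)$ into $(3)$.

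To prove $\Omega \subseteq \Lambda$ I would apply LP duality to the minimum-slack program $\min \delta$ subject to $\mu + \delta e \geq \lambda$ and $\mu \in Co(M_V)$. For any $\lambda \geq 0$ with $\lambda \notin \Lambda$ the optimum $\delta^* > 0$ is attained at some $\mu^* = \sum_k \alpha_k^* m_k$, with an optimal dual vector $y^* \geq 0$ satisfying $\sum_i y_i^* = 1$. Set $S = \{i : y_i^* > 0\}$. Complementary slackness forces $\lambda_i - \mu_i^* = \delta^*$ for every $i \in S$, and $y^* \cdot m_k = \alpha_G(y^*)$ for every $m_k$ with $\alpha_k^* > 0$. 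A short exchange argument then shows each such $[m_k]_S$ is maximal in $G_S$: any $i \in S$ not in $m_k$ whose neighbours in $m_k$ all lie outside $S$ (where $y^*$ vanishes) could be swapped for them, strictly increasing the $y^*$-weight and contradicting the optimality of $m_k$. Hence $[\mu^*]_S \in Co(M_S)$, and $[\lambda]_S = [\mu^*]_S + \delta^* e$ on $S$ strictly dominates $[\mu^*]_S$, which places $\lambda$ in $\Pi_S$ and therefore outside $\Omega$.

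The remaining two implications are then short. For $(2) \Rightarrow (3)$, overall local pooling gives $\sigma_l^* = 1$ for every $l$, so $\Sigma^*(G) = I$; Lemma \ref{lem:matrixomega} then yields $\Lambda = \Sigma^*(G)\Lambda \subseteq \Omega$, and combined with $\Omega \subseteq \Lambda$ this gives $\Omega = \Lambda$. For $(3) \Rightarrow (2)$ I argue the contrapositive. If $\sigma^*(G) < 1$ then some $S$ has $\sigma_S^* < 1$, so by the infimum characterisation in Definition \ref{def:setsigma} (taking $\sigma = 1 > \sigma_S^*$) there exist $\mu, \nu \in Co(M_S)$ with $\mu > \nu$ componentwise. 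Let $\lambda$ be the zero-extension of $\mu$ to $V$. Then $[\lambda]_S = \mu > \nu$ gives $\lambda \in \Pi_S$, so $\lambda \notin \Omega$; at the same time, extending each maximal schedule of $G_S$ to a maximal schedule of $G$ shows that the zero-extension of $\mu$ is dominated above by a vector in $Co(M_V)$, so $\lambda \in \Lambda$. This yields $\lambda \in \Lambda \setminus \Omega$, contradicting $(3)$. The main obstacle throughout is the $\Omega \subseteq \Lambda$ step; the delicate point there is choosing $S = \mathrm{supp}(y^*)$ rather than the primal tight set, since strict positivity of $y^*$ on $S$ is exactly what makes the exchange argument go through and forces $[m_k]_S \in M_S$.
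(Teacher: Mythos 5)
Your proof is correct, but it is organized quite differently from the paper's. The paper argues the cycle $1\Rightarrow 2\Rightarrow 3\Rightarrow 1$: for $1\Rightarrow 2$ it exhibits non-convexity directly, writing the zero-extension $\lambda$ of $\mu$ as a convex combination of zero-extensions of maximal schedules of $G_S$ --- each an independent set of $G$ and hence in $\Omega$ by Lemma \ref{lem:phiids} --- while $\lambda\notin\Omega$; for $2\Rightarrow 3$ it invokes Lemma \ref{lem:matrixomega} exactly as you do. You instead route everything through statement 3, replacing the direct non-convexity witness by ``convex and contains all extreme points, hence contains $\Lambda$'' (via Corollary \ref{cor:extpt}), which then obliges you to establish $\Omega\subseteq\Lambda$. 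That inclusion is the real divergence: the paper never proves it, yet its step from $\Lambda\subseteq\Omega$ to $\Omega=\Lambda$ silently relies on it, so your LP-duality argument (minimizing the uniform slack $\delta$, taking $S=\mathrm{supp}(y^*)$, and using the exchange argument to show every schedule in the optimal primal support restricts to a maximal schedule of $G_S$) supplies an ingredient the paper leaves implicit; the argument is sound, though you should state explicitly that after the swap you re-extend $m'$ to a maximal schedule of $G$ before comparing $y^*$-weights, since complementary slackness only bounds the weights of maximal schedules. Your $3\Rightarrow 2$ also cleanly sidesteps the paper's footnote reduction to the case $\nu>0$ by reading $\mu>\nu$ directly off the infimum form (\ref{eq:setsiginf}) of Definition \ref{def:setsigma}. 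What the paper's route buys is elementarity --- no LP duality, and the failure of convexity is exhibited constructively; what yours buys is self-containedness and the reusable fact $\Omega\subseteq\Lambda$.
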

\begin{proof}
\eat{

First, we show that statement 1 implies statement 2. Suppose $G$
is not overall local pooling. Then, there exists a non-empty set
$S \subseteq V$ such that $\sigma^*_S < 1$, which implies that
there exist $\mu, \nu \in Co(M_S)$ and $\sigma^*_S \mu \geq \nu$.
We can suppose $\nu > 0$. (If not, we let $\tilde{S} \subseteq S$,
where $\tilde{S} = \{l \in S | \nu(l) > 0\}$. Note that
$[\nu]_{\tilde{S}} \in Co(M_{\tilde{S}})$, since $\nu \in Co(M_S)$
and $[\nu]_{S-\tilde{S}}=0$.  Because $\mu \in Co(M_S)$ and
$\tilde{S} \subseteq S$, there must exist $\tilde{\mu} \in
Co(M_{\tilde{S}})$ such that $\tilde{\mu} \geq [\mu]_{\tilde{S}}$.
Since $\sigma^*_S \tilde{\mu} \geq \sigma^*_S [\mu]_{\tilde{S}}
\geq [\nu]_{\tilde{S}}$, we have $\sigma^*_{\tilde{S}} \leq
\sigma^*_S < 1$. If $\sigma^*_{\tilde{S}} = \sigma^*_S$, we rename
$\tilde{S}$ to be $S$, $\tilde{\mu}$ to be $\mu$ and
$[\nu]_{\tilde{S}}$ to be $\nu$ and we are done. If
$\sigma^*_{\tilde{S}} < \sigma^*_S$, we rename $\tilde{S}$ to be
$S$, find new $\mu, \nu \in Co(M_S)$ satisfying $\sigma^*_S \mu
\geq \nu$, and repeat the above procedure. The procedure will
terminate at some $S$ with at least three nodes, at which point
$\sigma^*_{\tilde{S}} = \sigma^*_S < 1$, since the cardinality of
any node set $S'$ for which $\sigma_{S'}^* < 1$ must be greater
than 1.)



({\bf Ye: Your proof was still incomplete. I added some more
details. Please check the part above in ( ) and make sure it is
right.})

Since $\sigma^*_S <1$ and $\nu>0$, we have $\mu
> \nu$. Let $\lambda$ be an extended vector from $\mu$ such that
$[\lambda]_S=\mu$ and $[\lambda]_{V-S}=0$. According to Definition
\ref{def:posidomin} and \ref{def:omegaregion}, $\lambda \not \in
\Omega$. Since $\mu \in Co(M_S)$, $\mu = \sum_{m_i \in M_S}
\alpha_i m_i$, where $\sum_i \alpha_i=1$ and $\alpha_i \geq 0$ for
all $i$. For each $i$, let $\tilde{m}_i$ be a $|V|$-dimensional
vector extended from $m_i$, such that $\tilde{m}_i(j)=m_i(j)$ when
$j \in S$ and $\tilde{m}_i(j)=0$ when $j \not \in S$. Clearly,
$\tilde{m}_i$ is an independent set of $G$. Hence, all
$\tilde{m}_i \in \Omega$ according to Lemma \ref{lem:phiids}.
Since $\lambda=\sum_{m_i \in M_S} \alpha_i \tilde{m}_i$ and
$\lambda \not \in \Omega$, we conclude that $\Omega$ is not
convex.
}
First, we prove that statement 1 implies statement 2. Suppose $G$
is not overall local pooling. We claim that there must exist a non-empty
set $S \subseteq V$ and $\mu, \nu \in Co(M_S)$ such that $\mu>\nu$. Since $G$
is not overall local pooling, there exists a non-empty set $S
\subseteq V$ such that $\sigma^*_S < 1$, which implies that there
exist $\mu, \nu \in Co(M_S)$ and $\sigma^*_S \mu \geq \nu$, according to (\ref{eq:setsigmapro}). If $\nu
> 0$, we have the required set $S$, and $\mu, \nu \in Co(M_S)$ with
$\mu > \nu$. If not, let $H = \{l \in S | \nu(l) > 0\}$.
Because $\nu \in
Co(M_S)$ and $[\nu]_{S-H}=0$, it is easy to show $[\nu]_H \in Co(M_H)$\footnote{Suppose we write $M_S = (m^i)_{i=1}^{|M_S|}$, where each $m^i$ is a maximal schedule with respect to $S$. We can represent $\nu$ as $\nu = \sum_{i = 1}^{|M_S|} \alpha_i m^i$, where $\sum_i \alpha_i=1$ and $\alpha_i \geq 0$ for all $i$. Since $[\nu]_{S-H}=0$, we have $[m^i]_{S-H} = 0$ for each $i$. It is clear that $[m^i]_H$ corresponds to an independent set of $G_H$, the subgraph of $G$ induced by $H$. Moreover, by the maximality of $m^i$ with respect to $S$, if $m^i(j) = 0$ for some $j \in H$, it must be that $m^i(k) = 1$ for some $k \in H$ and $j$ and $k$ interfere with each other, i.e., $(j, k) \in E$. Therefore, $[m^i]_H$ must be maximal with respect to $H$. Hence, by $[\nu]_H = \sum_{i = 1}^{|M_S|} \alpha_i [m^i]_H$, we get $[\nu]_H \in Co(M_H)$.}. Because $\mu \in Co(M_S)$ and
$H \subseteq S$, there must exist $\tilde{\mu} \in
Co(M_H)$ such that $\tilde{\mu} \geq [\mu]_H$.
Then, $\sigma^*_S \tilde{\mu} \geq \sigma^*_S [\mu]_H \geq
[\nu]_H > 0$. Thus, $\tilde{\mu} > [\nu]_H$ and
$\tilde{\mu}, [\nu]_H \in Co(M_H)$. By renaming
$H$ to be $S$, $\tilde{\mu}$ to be $\mu$ and
$[\nu]_H$ to be $\nu$, we have the required set $S$ and
$\mu, \nu \in Co(M_S)$ with $\mu>\nu$.

Let $\lambda \in \mathbb{R}_+^{|V|}$ be an extended vector from $\mu$ such that
$[\lambda]_S=\mu$ and $[\lambda]_{V-S}=0$. According to Definition
\ref{def:posidomin} and \ref{def:omegaregion}, $\lambda \not \in
\Omega$. Since $\mu \in Co(M_S)$, we can write $\mu = \sum_{i=1}^{|M_S|}
\alpha_i m^i$, where $\sum_i \alpha_i=1$ and $\alpha_i \geq 0$ for
all $i$, and $m^i$ for $i = 1, \ldots, |M_S|$ are all the maximal schedules with respect to $S$. For each $i$, let $\tilde{m}^i$ be a $|V|$-dimensional
vector extended from $m^i$, such that $\tilde{m}^i(j)=m^i(j)$ when
$j \in S$ and $\tilde{m}^i(j)=0$ when $j \not \in S$. Clearly,
each $\tilde{m}^i$ corresponds to an independent set of $G$. Hence, by Lemma \ref{lem:phiids}, $\tilde{m}^i \in \Omega$ for all $i$.
Since $\lambda=\sum_{i=1}^{|M_S|} \alpha_i \tilde{m}^i$ and
$\lambda \not \in \Omega$, we conclude that $\Omega$ is not
convex.

Next, we show that statement 2 implies statement 3. Since $G$ is
an overall local pooling graph, $\Sigma^*(G)= I$ (the identity
matrix). By Lemma \ref{lem:matrixomega}, $\Lambda \subseteq
\Omega$. Hence, $\Omega= \Lambda$.

Finally, statement 3 implies statement 1 since $\Lambda$ is
convex.
\end{proof}

\noindent  {\bf Remark.} Suppose, for a given interference graph, the LQF algorithm does not achieve
the full interior of the capacity region. Lemma \ref{lem:convexcond} implies that $\Omega$ is not convex. Furthermore, since the closure of the full
stability region of LQF (which is unknown) contains $\Omega$, it contains all the extreme points of the capacity region $\Lambda$. Hence, the closure of the full stability region of LQF cannot be convex either, and it cannot be characterized by any linear transformation of the capacity region.

\section{Stability Region $\Delta$ under LQF}
\label{sec:delta}

In this section, we develop a notion termed as uniformly
dominating vectors. It leads to a stability region
$\Delta_C$, which is a superset of $\Omega^o$. When the arrival
processes are not constant, i.e., when the variances of the i.i.d.
arrival processes are non-zero, we obtain a stability region
$\Delta_R$, which contains $\Delta_C$.

\subsection{Motivating Examples} \label{sec:examples}


\noindent {\bf Example 1:} We will first give an example to show
that an arrival rate vector $\lambda \not \in \Omega$ can sometime
be stabilized by LQF. Hence, there is a region larger than $\Omega$ that
captures the performance of LQF more precisely. The example also contains hints about how such a region can be defined.

Consider the six cycle graph $G$ in Fig. \ref{fig:sixcycle}. There are exactly five maximal schedules: $s^1=(1, 0, 1,
0, 1, 0)'$, $s^2=(0, 1, 0, 1, 0, 1)'$, $s^3=(1, 0, 0, 1, 0, 0)'$,
$s^4=(0, 1, 0, 0, 1, 0)'$, $s^5=(0, 0, 1, 0, 0, 1)'$. Suppose the
arrival rate vector is $\lambda=(5/12+\epsilon, 1/3+\epsilon,
1/3+\epsilon, 1/3+\epsilon, 1/3+\epsilon, 1/3+\epsilon)'$, where
$\epsilon>0$ is some small enough constant. Let $e=(1, 1, 1, 1, 1,
1)'$, $\mu=\frac{1}{2} e$ and $\nu=\frac{1}{3} e$. Then, one can
check that $\mu=\frac{1}{2} s^1+ \frac{1}{2} s^2$ and
$\nu=\frac{1}{3} s^3+ \frac{1}{3}s^4+ \frac{1}{3}s^5$, which
implies that $\mu, \nu \in Co(M_V)$. For $0<\epsilon<1/12$, $\nu <
\lambda < \mu$. Hence, $\lambda \in \Lambda^o$ and $\lambda \not
\in \Omega$ by Definition \ref{def:posidomin} and
\ref{def:omegaregion}.

Consider the fluid limit of the queue processes under LQF,
denoted by $\{q_l(t)\}_{t \geq 0}$, for all $l \in V$. For a fixed
(regular) time instance $t$, let $S$ be the set of those longest
queues whose time derivatives at $t$, $\dot{q}_l(t)$, are the
largest. The queues in $S$ will remain the longest with identical
length in the next infinitesimally small time interval. Since
$\lambda \in \Lambda^o$, $[\lambda]_S \in \Lambda^o_S$ by Lemma
\ref{lem:subsetcapacity}. If $S \neq V$, it is a fact that
the node-induced subgraph $G_S$ satisfies the local pooling
condition \cite{BCY09}. An argument similar to that in the proof of Theorem
\ref{thm:omegastable} shows that the queues in $S$ all have a
negative drift.


The case of $S=V$ is more subtle. Since only the maximal
schedules of $G$ are used during the
aforementioned infinitesimally small time interval, we can
assume that the service rate vector is $\gamma=\sum_{i=1}^{5}
\alpha_i s^i$, where $\sum_{i=1}^{5} \alpha_i=1$ and $\alpha_i \geq
0$ for all $i$. In the fluid
limit, $\dot{q}_l(t)=\lambda_l-\gamma_l$ for $l \in V$. By assumption, $\dot{q}_l(t)$ should be identical for all nodes
$l \in V$. However, one can check that it is impossible to find such $\gamma$ for the given $\lambda$. Therefore, the case of
$S=V$ would not have occurred, and only the case of $S \neq V$ needs to be considered. Hence, $G$ is stable under LQF for the given $\lambda$,
according to the discussion for the $S \neq V$ case.

\noindent {\bf Example 2:} Let $\lambda^1 =$ $0.7 (1/2-\epsilon$,
$1/2-\epsilon$, $1/2-\epsilon$, $1/2-\epsilon$, $1/2-\epsilon$,
$1/2-\epsilon)'$ and $\lambda^2=(1/2-\epsilon, 1/2-2\epsilon,
1/2-2\epsilon, 1/2-2\epsilon, 1/2-2\epsilon, 1/2-2\epsilon)'$ and
$\epsilon=10^{-3}$. Both $\lambda^1$ and $\lambda^2$ are outside
$\Omega$. 
Interestingly, although $\lambda^1 < \lambda^2$, $\lambda^1$ cannot be stabilized by LQF while $\lambda^2$ can. This has been verified by simulation experiments under constant arrivals. We
will next develop a theory that provides a larger stability region
and also can explain this counter-intuitive example.

\subsection{Uniformly Dominating Vector and $\Delta_C$ Region}

\begin{definition} \label{def:uniformposidomin}
Given a non-empty node set $S \subseteq V$,
a vector $\lambda \in \mathbb{R}_+^{|V|}$ is said to be a uniformly dominating
vector of $S$ if $[\lambda]_S=\nu +d e$ for some $\nu \in
Co(M_S)$ and scaler $d \geq 0$. The region composed with all the
uniformly dominating vectors of $S$ is called the
uniformly dominating region of $S$ and is denoted by
$\Gamma_S$. That is, 
\begin{align}
\Gamma_S &= \{\lambda \in \mathbb{R}_+^{|V|} \ | \ [\lambda]_S =\nu+ d e, \text{ for some } \nu \in Co(M_S) 
\text{ and some scalar } d \geq 0\}. \nonumber
\end{align}
\end{definition}
By convention, if $S = \emptyset$, we assume $\Gamma_S = \emptyset$.

\eat{

{\bf Ye: In the definition above, $\Gamma_S$ is a closed set. This
is due to $d \geq 0$. If you let $d > 0$ instead, then $\Gamma_S$
will be an open set, which is consistent with the earlier
definition of positive dominating set. Also, this way $\Delta_C$
will be a closed set, which is consistent with previous stability
regions. But, you need to make sure everything works after this
modification.}

{\bf Bo: Even we let $d>0$, $\Gamma_S$ will not be an open set.
You can check the example stated as follows: Consider the six
cycle example. If we let the arrival be $\lambda_n=\{4/9+1/n, 4/9,
4/9, 4/9, 4/9, 4/9\}$, then $\lambda_n$ is not a positive uniform
dominating vector in both definition $(\geq or >)$. Hence,
$\lambda_n  \not \in \Gamma_S$ for any set $S$ for large $n$.
Then, $\lambda_n \in \Delta_C$. However, when $n$ goes to
infinity, $\lambda_{\infty}=\{4/9, 4/9, 4/9, 4/9, 4/9, 4/9\}$.
Thus, the limit point $\lambda_{\infty} \in \Gamma_S$ and
$\lambda_{\infty} \not \in \Delta_C$.}

{\bf Actually, I intend to make $\Gamma_S$ closed and $\Delta_C$
open. Though this is somewhat not consistent with the earlier
definition of positive dominating vector, I think this is the way it
should be. If we let $d \geq 0$, $\Gamma_S$ will be a closed set and
$\Delta_C$ will be open. Let the set $B=\{b|b=d e, d \geq 0\}$.
Since $\Gamma_S=\{\lambda| \lambda \geq 0, [\lambda]_S =\nu+ d e,
\text{ for some } \nu \in Co(M_S) \text{ and scalar } d \geq 0\}$,
$\Gamma_S=\Lambda+B$. Since $\Lambda$ is compact and $B$ is closed,
$\Gamma_S=\Lambda+B$ is closed. This topology property helps to
prove the lemma in the last section.}

{\bf Ye: The example is interesting. However, even if $\Gamma_S$
closed is open, $\Gamma_S \bigcap \mathbb{R}_+^{|V|}$ is not open
because of the boundaries formed by the axis, which means
$\Delta_C$ is not open. You still need to re-state the theorem.
The easy way to do this is to talk about the interior of
$\Delta_C$. Is there anyway you can take care of these boundaries,
since the stability regions given by your theorems do not include
the axis, which should be stabilizable? Note that when we write
$\Delta_C^o$, the notion of interior is the standard one. But, in
$\Lambda^o$, the notion is not the standard one. Is there anyway
we can talk about the interior of $\Delta_C$ in the way we do
about $\Lambda^o$.}

{\bf Bo:  The closed and open property depends on which complete set
we consider. For example, when the complete set is $\mathbb{R}$, set
$A=[0,1)$ is not open. However, if we let complete set be
$\mathbb{R}_+$, set $A=[0,1)$ will become open. Now I realize that
the complete set people talking about the capacity region is
$\mathbb{R}_+^{|V|}$, since there is no negative arrival. (Maybe we
should talk about this in preliminaries) Hence, the axis will not
create any problem. Then the definition of  $\Lambda^o$ is standard.
Also, under such setting, $\Gamma_S$ is closed and $\Delta_C$ is
open.}

}

\begin{definition} \label{def:deltacregion}
The $\Delta_C$ region is defined by
\begin{align}
\Delta_C &= \bigcap_{S \subseteq V} \Gamma_S^c. \nonumber
\end{align}
\end{definition}
\noindent  {\bf Remark.} Note that a vector $\lambda$ is outside $\Delta_C$ if and only if
$\lambda \in \Gamma_S$ for some non-empty node set $S$.

\begin{lemma} \label{gammacloseddeltaopen}
For any non-empty $S \subseteq V$, $\Gamma_S$ is closed. Hence,
$\Delta_C$ is open.
\end{lemma}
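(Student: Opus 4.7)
The plan is to exhibit $\Gamma_S$ as the preimage of a closed set under a continuous map, so that closedness follows from a standard topological principle; then $\Delta_C$ becomes a finite intersection of open sets.

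First, I would fix a non-empty $S \subseteq V$ and rewrite the defining condition in $\mathbb{R}^{|S|}$. Let $K_S := Co(M_S) + \{d\, e_S : d \geq 0\}$, where $e_S$ denotes the all-ones vector in $\mathbb{R}^{|S|}$. Directly from Definition \ref{def:uniformposidomin},
\begin{align*}
\Gamma_S = \{\lambda \in \mathbb{R}_+^{|V|} \ | \ [\lambda]_S \in K_S\}.
\end{align*}
Since the restriction map $\lambda \mapsto [\lambda]_S$ is continuous on $\mathbb{R}_+^{|V|}$, it suffices to show that $K_S$ is closed in $\mathbb{R}^{|S|}$.

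Second, I would verify closedness of $K_S$ as a Minkowski sum of a compact set and a closed ray. The set $Co(M_S)$ is the convex hull of the finite collection $M_S$, hence compact, and $\{d\, e_S : d \geq 0\}$ is closed. For any sequence $y_n = \nu_n + d_n e_S \in K_S$ converging to some $y \in \mathbb{R}^{|S|}$, compactness of $Co(M_S)$ lets me extract a subsequence with $\nu_{n_k} \to \nu^* \in Co(M_S)$; then $d_{n_k} e_S = y_{n_k} - \nu_{n_k} \to y - \nu^*$, which must equal $d^* e_S$ for some real $d^*$, and the constraint $d_{n_k} \geq 0$ passes to the limit giving $d^* \geq 0$. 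Thus $y = \nu^* + d^* e_S \in K_S$, proving $K_S$ is closed.

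Third, with $\Gamma_S = P_S^{-1}(K_S)$ for the continuous projection $P_S(\lambda) = [\lambda]_S$ on $\mathbb{R}_+^{|V|}$, the set $\Gamma_S$ is closed in $\mathbb{R}_+^{|V|}$. For the statement about $\Delta_C$: because $V$ is finite, there are only finitely many non-empty subsets $S \subseteq V$, and using the convention $\Gamma_\emptyset = \emptyset$ (so $\Gamma_\emptyset^c = \mathbb{R}_+^{|V|}$ trivially),
\begin{align*}
\Delta_C = \bigcap_{\emptyset \neq S \subseteq V} \Gamma_S^c
\end{align*}
is a finite intersection of open sets, hence open.

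The only technically nontrivial step is the Minkowski-sum closedness argument above; everything else is bookkeeping about continuous preimages and finite intersections. I do not foresee any real obstacle provided we are careful that the ambient space is $\mathbb{R}_+^{|V|}$ (as the paper explicitly fixes for all topological statements), so that the non-negativity constraints on the components outside $S$ cause no difficulty.
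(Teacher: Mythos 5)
Your proof is correct and follows essentially the same route as the paper: both express $\Gamma_S$ as the Minkowski sum $Co(M_S) + \{d e : d \ge 0\}$ (compact plus closed, hence closed) lifted to $\mathbb{R}_+^{|V|}$, and then obtain $\Delta_C$ as a finite intersection of open complements. You simply spell out the subsequence argument and the continuous-preimage step that the paper leaves as ``it can be shown.''
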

\begin{proof}
Let $B=\{d e| d \geq 0\}$, where $e$ is $|S|$ dimensional, and let 
$C=Co(M_S)$. It is easy to see $C$ is compact and $B$ is closed.
From Definition \ref{def:uniformposidomin}, $\Gamma_S$ is $B+C$
extended to the $|V|$-dimensional space. It can be shown that
$B+C$ is closed, and hence, $\Gamma_S$ is closed. Then, $
\Delta_C=\bigcap_{S \subseteq V} \Gamma_S^c$ is open (with respect to the metric space $\mathbb{R}_+^{|V|}$).
\end{proof}

\begin{lemma} \label{lem:deltainter}
Suppose $\lambda \in \Delta_C$ and suppose $S \subseteq V$ is a non-empty node set. If
$\nu-[\lambda]_S=d e$ for some $\nu \in Co(M_S)$, then $d
>\epsilon_o$, for some $\epsilon_o>0$ independent of $S$ and $\nu$.
\end{lemma}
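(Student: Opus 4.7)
The plan is to mimic the proof of Lemma~\ref{lem:omegainter}, invoking the openness of $\Delta_C$ established in Lemma~\ref{gammacloseddeltaopen} in place of the interior condition on $\Omega^o$. Since $\Delta_C$ is open (relative to $\mathbb{R}_+^{|V|}$) and $\lambda \in \Delta_C$, I would first choose $\hat{\epsilon} > 0$ small enough that $\lambda + \hat{\epsilon} e \in \Delta_C$, where $e$ denotes the $|V|$-dimensional all-ones vector. Setting $\epsilon_o = \hat{\epsilon}/2$ will give the uniform lower bound.

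Next I would argue by contradiction. Suppose the conclusion fails for this $\epsilon_o$: then there exist a non-empty $S \subseteq V$, a vector $\nu \in Co(M_S)$, and a scalar $d \leq \epsilon_o$ such that $\nu - [\lambda]_S = d e$, where this $e$ is now $|S|$-dimensional. A direct computation gives
\begin{align*}
[\lambda + \hat{\epsilon} e]_S = [\lambda]_S + \hat{\epsilon} e = (\nu - d e) + \hat{\epsilon} e = \nu + (\hat{\epsilon} - d) e,
\end{align*}
and $\hat{\epsilon} - d \geq \hat{\epsilon} - \epsilon_o = \hat{\epsilon}/2 > 0$. By Definition~\ref{def:uniformposidomin}, this exhibits $\lambda + \hat{\epsilon} e$ as a uniformly dominating vector of $S$, i.e., $\lambda + \hat{\epsilon} e \in \Gamma_S$. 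By Definition~\ref{def:deltacregion}, this contradicts $\lambda + \hat{\epsilon} e \in \Delta_C$. Hence the claim holds with $\epsilon_o = \hat{\epsilon}/2$.

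The argument is essentially a direct analog of Lemma~\ref{lem:omegainter} and no step appears delicate. The only minor points to watch are: (i) the two occurrences of $e$ live in different spaces ($\mathbb{R}^{|V|}$ for the outer perturbation versus $\mathbb{R}^{|S|}$ for the restricted identity), so one must keep track of which is in play; (ii) the hypothesis does not a priori require $d \geq 0$, but since the contradiction goes through for any $d \leq \epsilon_o$, both small positive values and non-positive values of $d$ are ruled out simultaneously; and (iii) $\hat{\epsilon}$ depends only on $\lambda$ through the openness of $\Delta_C$ at $\lambda$, so the resulting $\epsilon_o$ is indeed uniform in $S$ and $\nu$, as required.
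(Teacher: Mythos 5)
Your proof is correct and follows essentially the same route as the paper's: both use the openness of $\Delta_C$ (Lemma \ref{gammacloseddeltaopen}) to perturb $\lambda$ by a positive multiple of $e$, then observe that $[\lambda+\hat{\epsilon}e]_S=\nu+(\hat{\epsilon}-d)e$ would place the perturbed vector in $\Gamma_S$ unless $d$ exceeds the perturbation. The only cosmetic difference is your safety factor $\epsilon_o=\hat{\epsilon}/2$, which the paper dispenses with by taking $\epsilon_o=\hat{\epsilon}$ directly and concluding $d>\epsilon_o$ from $\epsilon_o-d<0$.
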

\begin{proof}
Suppose $\nu-[\lambda]_S=de$ (here, $e$ is of $|S|$-dimension) for
some $\nu \in Co(M_S)$. Since $\lambda \in \Delta_C$ and
$\Delta_C$ is open, $\lambda+\epsilon_o e \in \Delta_C$ (here, $e$
is of $|V|$-dimension) for some small enough $\epsilon_o>0$
independent of $S$ and $\nu$. Then $\nu-[\lambda+\epsilon_o
e]_S=(d-\epsilon_o)e$ or $[\lambda+\epsilon_o
e]_S=\nu+(\epsilon_o-d)e$. Since $\lambda+\epsilon_o e \in
\Delta_C$, $\lambda+\epsilon_o e \not \in \Gamma_S$. Hence,
$\epsilon_o-d<0$ or $d>\epsilon^o$.
\end{proof}


%
%

The constant $\epsilon_o$ will serve as a bound for the rate of the Lyapunov drift in the performance analysis.

\subsection{Performance Guarantee of LQF in $\Delta_C$ Region}

\begin{theorem} \label{thm:deltacstability}
If an arrival rate vector
$\lambda$ satisfies $\lambda \in \Delta_C$, then, the network is
stable under the LQF policy.
\end{theorem}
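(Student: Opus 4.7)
The plan is to follow the same fluid-limit template that was used in the sketch of Theorem~\ref{thm:omegastable} (and originally in \cite{DW06}), and to redo the single step that is genuinely different: the characterization of the derivative of the longest queues. As before, I would pass to the fluid limit $\{q_l(t)\}_{l\in V,\, t\ge 0}$ of the queue processes under LQF, pick a regular time $t$ where all the $\dot q_l(t)$ exist, and consider the maximum queue length $Q(t) = \max_l q_l(t)$ as a candidate Lyapunov function. To conclude positive recurrence of the underlying Markov chain via Dai's criterion \cite{Dai95}, it is enough to show that whenever $Q(t) > 0$, the right derivative of $Q(t)$ satisfies $\dot Q(t) \le -\epsilon_o$ for a constant $\epsilon_o > 0$ depending only on $\lambda$ and $G$.

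First I would define $S \subseteq V$ to be the set of those queues that achieve the maximum $Q(t)$ and that also have the largest value of $\dot q_l(t)$ among the longest queues. By the usual fluid-limit argument, the queues in $S$ remain simultaneously longest and of identical length throughout an infinitesimal interval to the right of $t$, so the common value $c := \dot q_l(t)$ for $l \in S$ is precisely the right derivative of $Q(t)$. Because in that interval every queue outside $S$ is strictly shorter than every queue in $S$, the LQF schedule in use, when restricted to $S$, must be maximal with respect to $G_S$ (this is the key property of LQF recorded in the Remark at the end of Section~\ref{sec:preliminaries}). Time-averaging over that interval, the effective service-rate vector $\nu$ restricted to $S$ lies in $Co(M_S)$, and the fluid equations give $[\lambda]_S - \nu = c\,e$, i.e.\ $[\lambda]_S = \nu + c\,e$.

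Now I would split on the sign of $c$. If $c \ge 0$, then by Definition~\ref{def:uniformposidomin} the vector $\lambda$ belongs to $\Gamma_S$, contradicting $\lambda \in \Delta_C = \bigcap_{S'\subseteq V}\Gamma_{S'}^{c}$. Hence $c < 0$, and setting $d := -c > 0$ gives $\nu - [\lambda]_S = d\,e$ with $\nu \in Co(M_S)$. Lemma~\ref{lem:deltainter} then yields $d > \epsilon_o$ for a constant $\epsilon_o > 0$ that does not depend on $S$ or $\nu$. Therefore $\dot Q(t) = c < -\epsilon_o$ uniformly at every regular $t$ with $Q(t) > 0$, so $Q(t)$ hits $0$ in bounded time from any initial fluid state, which is the fluid-stability condition needed to invoke Dai's theorem and conclude positive recurrence.

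The main obstacle, as in Theorem~\ref{thm:omegastable}, is not the new algebraic content but the technical justification of the fluid-limit step: one must argue that along a convergent subsequence of scaled processes the limit is absolutely continuous, that the derivatives $\dot q_l(t)$ and the service-rate vector $\nu$ are well-defined at regular $t$, and that $\nu$ indeed factors through $Co(M_S)$. All of this is precisely the machinery built in \cite{DW06}, so as in the proof of Theorem~\ref{thm:omegastable} I would cite that construction verbatim and present only the modification above, where the use of Lemma~\ref{lem:deltainter} replaces the use of Lemma~\ref{lem:omegainter} and the ``uniform equality'' $[\lambda]_S = \nu + c\,e$ replaces the weaker componentwise inequality available for $\Omega$.
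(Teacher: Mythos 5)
Your proposal matches the paper's own argument: the paper's sketch likewise takes the fluid limit, identifies the set $S$ of longest queues that remain longest, notes that LQF forces the service rate restricted to $S$ into $Co(M_S)$ with $\nu_S - [\lambda]_S = \epsilon e$, and then invokes Lemma~\ref{lem:deltainter} to obtain the uniform negative drift $\epsilon > \epsilon_o$ before concluding positive recurrence via the standard fluid-stability criterion. Your explicit case split on the sign of $c$ is subsumed by Lemma~\ref{lem:deltainter} as stated (its hypothesis places no sign restriction on $d$), but this is a harmless elaboration rather than a difference in approach.
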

\begin{proof}[Sketch of Proof]
%
Again, consider the fluid limit of the queue process and apply a
similar argument as in the proof of Theorem \ref{thm:omegastable}.
Let $S \subseteq V$ be the set of nodes whose queues are the longest at time
$t$ and will remain the longest for the next infinitesimally small
time interval. Let $\nu_S$ be the service rate vector for the
nodes in $S$ at time $t$. Under LQF, $\nu_S \in Co(M_S)$ and
$\nu_S-[\lambda]_S=\epsilon e$ for some $\epsilon$. Since $\lambda
\in \Delta_C$, by Lemma \ref{lem:deltainter}, we have $\epsilon>\epsilon_o$ for some $\epsilon_o>0$ independent of $S$
and $\nu$. Hence, at any time instance, each of the longest queues
decreases at a positive rate no less than $\epsilon_o$. This is
sufficient to conclude that the original queueing process is a
positive recurrent Markov process, which means the queues are
stable.
\end{proof}

\begin{lemma}
$\Omega^o \subseteq \Delta_C$.
\end{lemma}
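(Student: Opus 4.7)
The plan is to prove the contrapositive form: fix $\lambda \in \Omega^o$ and show that $\lambda \notin \Gamma_S$ for every non-empty $S \subseteq V$, which by Definition \ref{def:deltacregion} gives $\lambda \in \Delta_C$. So suppose for contradiction that $[\lambda]_S = \nu + d e$ for some $\nu \in Co(M_S)$ and some scalar $d \geq 0$, and split on whether $d$ is strictly positive or zero.

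In the first case, $d > 0$, the identity $[\lambda]_S = \nu + d e$ gives $[\lambda]_S > \nu$ componentwise, which directly places $\lambda$ in $\Pi_S$ by Definition \ref{def:posidomin}. But $\Pi_S \subseteq \Omega^c$ by Definition \ref{def:omegaregion}, contradicting $\lambda \in \Omega^o \subseteq \Omega$. This case is essentially bookkeeping.

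The second case, $d = 0$, is where the real content sits: here $[\lambda]_S = \nu$ exactly, which does not by itself force $\lambda \in \Pi_S$ (the inequality in the definition of $\Pi_S$ is strict). I would close this gap by invoking Lemma \ref{lem:omegainter}, which guarantees a uniform slack $\epsilon_o > 0$ such that for every non-empty $S$ and every $\nu \in Co(M_S)$ there is some $l \in S$ with $\lambda(l) + \epsilon_o < \nu(l)$. Applied to the particular $S$ and $\nu$ in question, this contradicts $[\lambda]_S = \nu$, which would require $\lambda(l) = \nu(l)$ for all $l \in S$.

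The main ``obstacle'' is conceptual rather than technical: one must notice that the boundary case $d = 0$ is not handled by the obvious strict-dominance argument and that the purpose of the refined Lemma \ref{lem:omegainter} (with its uniform $\epsilon_o$, as opposed to merely strict inequality) is precisely to rule out such boundary configurations. Once that observation is made, the two-case dichotomy above completes the proof in a few lines, and no additional machinery beyond Definitions \ref{def:posidomin}, \ref{def:omegaregion}, \ref{def:uniformposidomin}, \ref{def:deltacregion} and Lemma \ref{lem:omegainter} is required.
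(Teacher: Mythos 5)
Your proof is correct and rests on the same underlying idea as the paper's: exploiting the interiority of $\lambda$ to get a uniform slack, which forces a contradiction with $[\lambda]_S = \nu + d e$. The only cosmetic difference is that the paper handles both of your cases at once by perturbing inline (from $\lambda + \epsilon e \in \Omega$ it gets $[\lambda+\epsilon e]_S = \nu + (d+\epsilon)e > \nu$, hence $\lambda+\epsilon e \in \Pi_S$, a contradiction), whereas you split on $d>0$ versus $d=0$ and route the boundary case through Lemma \ref{lem:omegainter}; both are sound.
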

\begin{proof}
Suppose $\Omega^o \not \subseteq \Delta_C$. Then, there exists a
vector $\lambda \in \Omega^o$ and $\lambda \not \in \Delta_C$.
Hence, $[\lambda]_S=\nu+d e $ for some non-empty $S \subseteq V$,
$\nu \in Co(M_S)$ and $d \geq 0$. Since $\lambda \in \Omega^o$,
$\lambda+\epsilon e \in \Omega$ for some small enough
$\epsilon>0$. From $[\lambda+\epsilon e]_S=\nu+(d+\epsilon)
e$ and $d+\epsilon >d \geq 0$, we have $[\lambda+\epsilon
e]_S>\nu$. Hence, $\lambda+\epsilon e \not \in \Omega$, leading to
a contradiction.
\end{proof}


Consider Example 2 in Section \ref{sec:examples}. With the linear
programming tools introduced in Section \ref{sec:hypergraph}, one can check
that $\lambda^2 \in \Delta_C$ but $\lambda^1 \not \in
\Delta_C$. This explains why $\lambda^2$ can be stabilized by LQF
while $\lambda^1$ cannot, even though $\lambda^2 > \lambda^1$.


\subsection{Rank Condition and $\Delta_R$ Region }

For the same average arrival rate vector, whether the i.i.d. arrival
processes have zero or non-zero variances leads to significantly
different stability behavior (in the former case, the arrival
processes are deterministic with constant rates). This issue has been
discussed in \cite{DW06} where the authors develop a queue
separation result related to a rank condition about the matrices
of the maximal independent sets. We next generalize the rank
condition. Then, we extend $\Delta_C$ to a larger stability region
$\Delta_R$. We will show $\Delta_R$ can be stabilized under LQF
when the arrival processes all have non-zero variances.


\begin{definition} \label{def:rankingcond}
Let $S \subseteq V$ be a non-empty set. We call the matrix $(M_S, e)$ the extended schedule matrix for $S$ (or graph $G_S$). Let $R(M_S, e)$ denote the rank of the extended
schedule matrix, i.e., the number of linearly independent columns
in the matrix $(M_S, e)$. We say $S$ (or graph $G_S$) has a high rank if $R(M_S, e) = |S|$. Otherwise, we
say $S$ (or $G_S$) has a low rank.
\end{definition}

Suppose $S \subseteq V$ is the set of nodes with the longest queues
at some time instance. When the arrival has non-zero variances, the queue separation result suggests (Lemma 1 and Lemma 3 of
\cite{DW06}): If the rank $R(M_S) \leq |S|-2$, then, with
probability 1, the queue sizes of $S$ will not stay identical in the
next infinitesimal time interval. We find that the condition $R(M_S)
\leq |S|-2$ can be relaxed to $R(M_S, e) \leq |S|-1$, i.e., the low
rank condition in Definition $\ref{def:rankingcond}$. The queue
separation lemma (Lemma 1 of
\cite{DW06}) uses the assumption $R(M_S) \leq |S|-2$ to obtain a
vector $\nu$ such that $\nu' e=0$ and $\nu' M_S=0$. Such a vector
$\nu$ still exists when the low rank condition in Definition $\ref{def:rankingcond}$ is satisfied. Then,
every subsequent step in the proof of the queue separation lemma
still holds. The low rank condition is a generalization since
$R(M_S) \leq |S|-2$ implies $R(M_S, e) \leq |S|-1$.

Roughly speaking, when the variances are non-zero, the randomness in the arrival processes pressures the queues in $S$ to move around in an $|S|$-dimensional space. This means that the $|S|$ queues cannot be simultaneously the longest queues for a sustained period of time (in which case, the queue trajectory moves along a line), unless the service can fully compensate the pressure from the arrival processes. But, full compensation is not possible in the low-rank case since the service rate vector lives in a lower-dimensional space. What will happen is that some subset of the queues in $S$ with a high rank will dominate the rest. This is known as queue separation. The implication is that, in the case of non-zero variances, there is no need to consider the low-rank subsets of $V$ when evaluating the performance degradation of LQF. The discussion motivates the following definition of $\Delta_R$.

\begin{definition} \label{def:deltasregion}
The $\Delta_R$ region is defined by
\begin{align}
\Delta_R &= \bigcap_{S \subseteq V, S \text{ with high rank}}
\Gamma_S^c. \nonumber
\end{align}
\end{definition}
In words, a vector $\lambda$ is outside $\Delta_R$ if and only
if $\lambda \in \Gamma_S$ for some node set $S$ that has a high
rank.

By comparing the definitions of
$\Delta_C$ and $\Delta_R$, we have the following lemma.
\begin{lemma} \label{lem:deltacs}
$\Delta_C \subseteq \Delta_R$.
\end{lemma}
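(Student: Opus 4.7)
The plan is to observe that this inclusion is immediate from the two definitions, since $\Delta_R$ is defined as an intersection over a strict sub-collection of the sets whose intersection defines $\Delta_C$. Concretely, compare
\begin{align*}
\Delta_C = \bigcap_{\emptyset \neq S \subseteq V} \Gamma_S^c
\quad \text{and} \quad
\Delta_R = \bigcap_{\emptyset \neq S \subseteq V,\, S \text{ high rank}} \Gamma_S^c.
\end{align*}
The index set of the second intersection is a subset of the index set of the first, so $\Delta_C$ is an intersection of a (weakly) larger family and therefore sits inside $\Delta_R$.

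To make this formal, I would start by fixing an arbitrary $\lambda \in \Delta_C$. By Definition \ref{def:deltacregion}, $\lambda \in \Gamma_S^c$ for every non-empty $S \subseteq V$. In particular, this holds for every non-empty $S \subseteq V$ that has high rank in the sense of Definition \ref{def:rankingcond}. Hence $\lambda$ lies in the intersection defining $\Delta_R$, so $\lambda \in \Delta_R$. Since $\lambda$ was arbitrary, $\Delta_C \subseteq \Delta_R$.

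There is no real obstacle here: the content of the lemma is bookkeeping about which subsets $S$ are kept in the intersection. The only thing worth double-checking is the edge case $S = \emptyset$, which by the convention $\Gamma_\emptyset = \emptyset$ (stated right after Definition \ref{def:uniformposidomin}) contributes $\Gamma_\emptyset^c = \mathbb{R}_+^{|V|}$ to both intersections and thus does not affect either side. No other subtlety arises, and no properties of the rank condition beyond its role in restricting the index set are needed for this inclusion.
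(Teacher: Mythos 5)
Your proposal is correct and matches the paper exactly: the paper gives no written argument beyond ``by comparing the definitions of $\Delta_C$ and $\Delta_R$,'' which is precisely the index-set containment you spell out. Your handling of the $S = \emptyset$ convention is a harmless extra check and does not change anything.
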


In addition to the i.i.d and mutually independent assumptions, the following assumption on the arrival processes is needed for technical reasons (see \cite{DW06} for their relevance).


\noindent {\bf A1:} (The large deviation bound on the arrival
processes) Let $A_l(n)$ be the cumulative arrivals at queue $l$ (at node $l \in V$) up to time $n$, and let $\lambda_l$ be the average arrival rate at
queue $l$. For each $\epsilon>0$,
\begin{align*}
 & P(|\frac{A_l(n)}{n}-\lambda_l|>\epsilon) \leq \beta
  \exp(-n\gamma(\epsilon)) \text{ for all } n \geq 1,
  \text{ for some } \gamma(\epsilon)>0 \text{ and } \beta>0.
\end{align*}

 
\begin{theorem} \label{thm:deltasstability}
Assume the condition in A1 holds and assume the variance of
the i.i.d arrival process to each node is non-zero but finite. If
an arrival rate vector $\lambda$ satisfies $\lambda \in \Delta_R$,
then, the network is stable under the LQF policy.
\end{theorem}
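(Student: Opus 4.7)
The plan is to follow the fluid limit argument of Theorem \ref{thm:deltacstability}, but restrict attention to high-rank subsets by exploiting the queue separation phenomenon that arises when the arrival processes have non-zero variance. First, I would set up the fluid limit $\{q_l(t)\}_{t \geq 0}$ of the queue process under LQF, using assumption A1 (the large deviation bound) to justify fluid limit convergence as in \cite{DW06}. At a regular time $t$, let $S \subseteq V$ be the set of queues that are the longest and whose derivatives $\dot q_l(t)$ are maximal, so these queues remain the longest with equal length in the next infinitesimally small interval. By the key remark about LQF, every scheduling instant in that interval yields a schedule maximal with respect to $S$, so the average service rate vector $\nu$ for nodes in $S$ lies in $Co(M_S)$. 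Since all queues in $S$ grow (or decrease) at a common rate, $\nu - [\lambda]_S = d\, e$ for some scalar $d$, i.e., $[\lambda]_S = \nu - d\, e$.

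Next, I would invoke a generalized queue separation result: when the i.i.d.\ arrival processes have non-zero finite variance, any set $S$ of tied longest queues in the fluid limit must satisfy the high-rank condition $R(M_S, e) = |S|$ with probability one. This follows by the extension, noted in the discussion preceding Definition \ref{def:deltasregion}, of Lemma 1 and Lemma 3 of \cite{DW06} from their original condition $R(M_S) \leq |S|-2$ to the weaker low-rank condition $R(M_S, e) \leq |S|-1$. The DW proof only needs a vector $\eta$ with $\eta' e = 0$ and $\eta' M_S = 0$ to drive the separation estimate, and such an $\eta$ exists exactly when $R(M_S, e) \leq |S|-1$, so low-rank subsets cannot sustain a tie. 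Hence, without loss of generality, the set $S$ arising in the fluid limit is high-rank.

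Then, since $\lambda \in \Delta_R$ and $S$ has high rank, by Definition \ref{def:deltasregion} we have $\lambda \notin \Gamma_S$. I would prove an analog of Lemma \ref{lem:deltainter} for $\Delta_R$: there exists $\epsilon_o > 0$, independent of the choice of high-rank $S$ and $\nu \in Co(M_S)$, such that whenever $\nu - [\lambda]_S = d\, e$ we have $d > \epsilon_o$. Openness of $\Delta_R$ (each $\Gamma_S$ is closed by Lemma \ref{gammacloseddeltaopen}, and $\Delta_R$ is a finite intersection of complements of closed sets) allows the same perturbation argument as in Lemma \ref{lem:deltainter}: pick $\hat\epsilon$ small so $\lambda + \hat\epsilon e \in \Delta_R$, and observe that any $d \leq \hat\epsilon$ would yield $[\lambda + \hat\epsilon e]_S = \nu + (\hat\epsilon - d) e$, placing $\lambda + \hat\epsilon e$ in $\Gamma_S$, a contradiction. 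Consequently every longest queue in the fluid limit decreases at a rate at least $\epsilon_o$, which by standard fluid-limit stability results (see \cite{Dai95}) yields positive recurrence of the underlying Markov process, i.e., stability.

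The main obstacle will be rigorously lifting the generalized queue separation from an almost-sure sample-path statement to the fluid-limit framework, so that the tied set $S$ can be assumed high-rank without loss of generality. Concretely, this requires revisiting the sample-path construction in \cite{DW06} and verifying that the vector $\eta$ produced under the weaker rank condition $R(M_S,e) \leq |S|-1$ still drives the Borel--Cantelli style estimates through, using non-zero variance of the arrivals and assumption A1 to rule out low-rank ties in the limit. Once this generalization is in place, the remainder of the argument is a direct translation of the $\Delta_C$ proof to the restricted intersection that defines $\Delta_R$.
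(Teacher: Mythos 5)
Your proposal is correct and follows essentially the same route as the paper's own (sketched) proof: fluid limit, the generalized queue separation argument (via a vector $\eta$ with $\eta' e = 0$ and $\eta' M_S = 0$ under the relaxed rank condition $R(M_S,e) \leq |S|-1$) to restrict attention to high-rank tied sets, and then the $\Delta_C$ drift argument applied only to those sets. You in fact supply slightly more detail than the paper, e.g.\ the explicit $\Delta_R$-analog of Lemma \ref{lem:deltainter} giving a uniform $\epsilon_o$ over high-rank subsets, which the paper leaves implicit.
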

\begin{proof}[Sketch of Proof]
Again, consider the fluid limit of the queue process and apply a
similar argument as in the proof of Theorem \ref{thm:omegastable}.
Let $S \subseteq V$ be the set of nodes whose queues are the longest at time
$t$ and will remain the longest for the next infinitesimally small
time interval. By replicating most of the arguments in the queue separation lemmas (Lemma 1 and Lemma 3 in \cite{DW06}), it can be shown that $S$
must have a high rank\footnote{The only change is to Lemma 3 in \cite{DW06}. Instead of saying for any low-rank set, there must be a subset that satisfies local pooling, we say for any low-rank set, there must be a subset that is of high rank. This is so because a set with a single node is of high rank. The modification is needed in the proof of Lemma 3 in \cite{DW06}. The statement of Lemma 3 also needs to be modified accordingly.}. Otherwise, the queue sizes of the nodes in
$S$ will be separated and they cannot all remain the longest.
Hence, we can apply the same argument as that in Theorem
\ref{thm:deltacstability}, but only to the high-rank node sets.
\end{proof}

Some graph examples are given in Fig. \ref{fig:lqfregion}, regarding
their set $\sigma$-local pooling factors and ranks. Note that, the
shaded region includes those subsets $S$ which either satisfy
$\sigma^*_S=1$, i.e., set local pooling (SLoP), or have low rank. Those subsets need not to be considered for the performance of LQF in case of non-zero variances.

\eat{ {\bf Ye: In Fig. \ref{fig:lqfregion}, what does the label
``Local Pooling'' mean? If I understand it correctly, the graphs
inside the oval have set local pooling factor equal to 1; outside,
the factor is less than 1. Is this correct? If so, please change the
label appropriately, e.g., set local pooling and non-set local
pooling. Also, are there any more comments you wish to say about
this figure besides the above short description?} }

\begin{figure}[htbp]
  \begin{center}
    \includegraphics[width=2.9in]{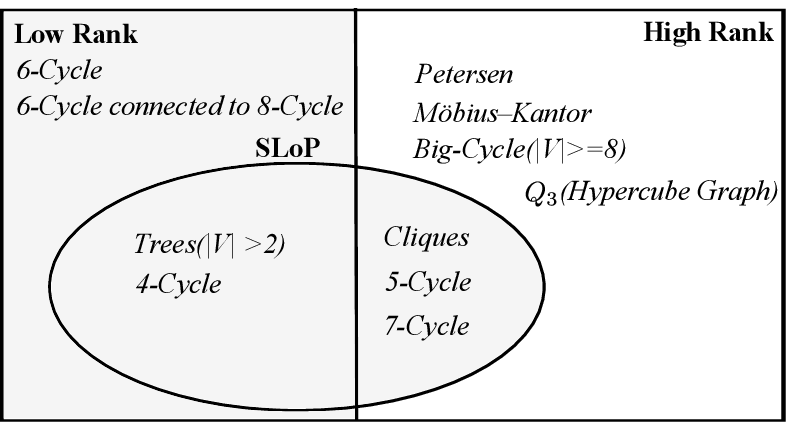}
  \end{center}
  \caption{Graph examples and classification by the set $\sigma$-local pooling factor and rank condition.
  For graphs $G_S=(S, L)$ inside the oval, $\sigma_S^* = 1$; outside the oval, $\sigma_S^* < 1$.
  The graph labeled `6-cycle connected to 8-cycle' is shown in Fig. \ref{fig:eight.and.six.cycles}.}
  \label{fig:lqfregion}
\end{figure}

\subsection{Further Properties of Regions $\Delta_C$ and $\Delta_R$}

It has been demonstrated that $\Delta_C \subseteq
\Delta_R$. We now continue to study the properties of the two
regions and their relationship.

\eat{

\begin{lemma} \label{lem:omegaclosed}
Region $\Omega$ is a closed.
\end{lemma}
\begin{proof}
By definition \ref{def:omegaregion}, $\Omega=\bigcap_{S \subseteq
V} \Pi_S^c=(\bigcup_{S \subseteq V} \Pi_S)^c$. Since
$\Pi_S=\{\lambda| \lambda \geq 0, [\lambda]_S > \nu, \text{for
some } \nu \in Co(M_S)\}$, $\Pi_S$ is open.  Then, $\bigcup_{S
\subseteq V} \Pi_S$ is open which implies that $\Omega=(\bigcup_{S
\subseteq V} \Pi_S)^c$ is closed.
\end{proof}

}

\begin{theorem} \label{thm:deltacr}
The closures of $\Delta_C$ and $\Delta_R$ are the same, i.e.
$\overline{\Delta_C}=\overline{\Delta_R}$.
\end{theorem}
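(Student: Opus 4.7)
One direction is immediate. By Lemma \ref{lem:deltacs}, $\Delta_C \subseteq \Delta_R$, and therefore $\overline{\Delta_C} \subseteq \overline{\Delta_R}$. The remaining task is to establish $\overline{\Delta_R} \subseteq \overline{\Delta_C}$, which in turn is implied by the stronger statement $\Delta_R \subseteq \overline{\Delta_C}$. The entire proof will be devoted to this inclusion.

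The plan rests on the following geometric observation about low-rank sets, which is the substantive step of the argument. Fix a non-empty $S \subseteq V$ with $R(M_S, e) < |S|$. By Definition \ref{def:uniformposidomin}, the projection of $\Gamma_S$ onto the coordinates in $S$ equals $Co(M_S) + \{d e : d \geq 0\}$, which is contained in the linear span of the columns of the extended schedule matrix $(M_S, e)$. This span is a proper subspace of $\mathbb{R}^{|S|}$, hence has Lebesgue measure zero in $\mathbb{R}^{|S|}$. Consequently, $\Gamma_S$ itself is contained in the Cartesian product of this proper subspace with $\mathbb{R}_+^{|V \setminus S|}$, which has Lebesgue measure zero in $\mathbb{R}^{|V|}$. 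Combined with the fact that $\Gamma_S$ is closed (Lemma \ref{gammacloseddeltaopen}), this shows $\Gamma_S$ is nowhere dense in $\mathbb{R}_+^{|V|}$ whenever $S$ has low rank.

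Next I would exploit openness and the Baire-type consequence of the previous step. The region $\Delta_R = \bigcap_{S \text{ high rank}} \Gamma_S^c$ is a finite intersection of open sets, hence open. Given $\lambda \in \Delta_R$, choose an open neighborhood $U$ of $\lambda$ with $U \subseteq \Delta_R$. Writing
\begin{align*}
\Delta_C = \Delta_R \cap \bigcap_{S \text{ low rank}} \Gamma_S^c = \Delta_R \setminus \bigcup_{S \text{ low rank}} \Gamma_S,
\end{align*}
we observe that $\bigcup_{S \text{ low rank}} \Gamma_S$ is a finite union of nowhere-dense closed sets, hence nowhere dense. Therefore $U \cap \Delta_C = U \setminus \bigcup_{S \text{ low rank}} \Gamma_S$ is dense in $U$, so we can extract a sequence $\lambda_n \in \Delta_C$ with $\lambda_n \to \lambda$. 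This yields $\lambda \in \overline{\Delta_C}$ and completes the proof.

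The main obstacle, and the only step that requires more than routine topology, is the dimension/measure argument showing $\Gamma_S$ is nowhere dense for low-rank $S$. The subtlety is that $\Gamma_S$ is unbounded in the $e$-direction (because of the $de$ term with $d \geq 0$) and only constrained in the $S$-coordinates, so one must be careful to argue at the level of the projection rather than the full set, and then recover the conclusion in $\mathbb{R}_+^{|V|}$ by noting that the remaining $V \setminus S$ coordinates are unconstrained and contribute no additional dimensions to overcome the deficiency $|S| - R(M_S, e) \geq 1$.
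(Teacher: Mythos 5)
Your proposal is correct, and it reaches the conclusion by a genuinely different route than the paper. Both arguments hinge on the same algebraic fact -- that a low-rank set $S$ admits a nonzero $g$ with $g'e=0$ and $g'M_S=0$, equivalently that the column space of $(M_S,e)$ is a proper subspace of $\mathbb{R}^{|S|}$ -- but they exploit it differently. The paper's proof is constructive: starting from $\tilde{\lambda}\in\Delta_R-\Delta_C$, it shifts to $\lambda=\tilde{\lambda}+\epsilon e$ and then adds a geometrically decaying sequence of perturbations $2^{-j}\epsilon\, g^{j}$ along such null vectors, proving via the inner-product estimate $(g^{j})'[\lambda^k]_{S_j}>0$ that each low-rank set, once escaped, is never re-entered, so the process terminates at a point of $\Delta_C$ within a controlled distance of $\tilde{\lambda}$. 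Your proof replaces this bookkeeping with a category/measure argument: each low-rank $\Gamma_S$ sits inside a proper subspace crossed with the free coordinates, is closed by Lemma \ref{gammacloseddeltaopen}, hence is nowhere dense in $\mathbb{R}_+^{|V|}$ (a nonempty relatively open subset of the orthant has positive Lebesgue measure, so it cannot lie in a measure-zero closed set); a finite union of closed nowhere-dense sets is nowhere dense, so it cannot swallow any neighborhood $U\subseteq\Delta_R$ of a point $\lambda\in\Delta_R$, giving points of $\Delta_C=\Delta_R\setminus\bigcup_{S\text{ low rank}}\Gamma_S$ arbitrarily close to $\lambda$. Your version is shorter and cleaner, and proves the slightly stronger inclusion $\Delta_R\subseteq\overline{\Delta_C}$ directly; the paper's version is elementary (no measure theory) and yields an explicit approximating point with a quantitative distance bound. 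Two cosmetic remarks: for a finite union, "nowhere dense" needs no Baire-type input (a finite union of closed sets with empty interior has empty interior); and you could bypass measure altogether by noting that a proper linear subspace of $\mathbb{R}^{|S|}$ is itself closed with empty relative interior in the orthant.
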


\begin{proof}
Since $\Delta_C \subseteq \Delta_R$, we have $\overline{\Delta_C}
\subseteq \overline{\Delta_R}$. We will next show
$\overline{\Delta_R} \subseteq \overline{\Delta_C}$. Since
$\Delta_R=\Delta_C \bigcup (\Delta_R-\Delta_C)$ and
$\overline{\Delta_R}=\overline{\Delta_C} \bigcup
\overline{(\Delta_R-\Delta_C)}$, we only need to show
$\overline{(\Delta_R-\Delta_C)} \subseteq \overline{\Delta_C}$.

Given any vector $\tilde{\lambda} \in \Delta_R-\Delta_C$, by
comparing Definition \ref{def:deltacregion} and
\ref{def:deltasregion}, we have 
\begin{align}
\tilde{\lambda} \in \bigcap_{\substack{S
\subseteq V \\S \text{ with high rank}}} \Gamma^c_S \bigcap \ \bigl(
\bigcup_{\substack{S \subseteq V\\ S \text{ with low rank}}} \Gamma_S \bigr). \nonumber
\end{align}
By
Lemma \ref{gammacloseddeltaopen}, $\Gamma_S$ is a closed set and
$\Gamma^c_S$ is open. Hence, $\Delta_R$ is open. Therefore, there
exists $\delta>0$ such that $\gamma \in \Delta_R$ whenever $\gamma
\geq 0$ and the distance between the two vectors
$d(\tilde{\lambda}, \gamma)< \delta$.

Let $0<\epsilon<\frac{1}{2\sqrt{|V|}} \delta$ and
$\lambda=\tilde{\lambda}+\epsilon e$. Then, the distance between
$\lambda$ and $\tilde{\lambda}$ is $d(\lambda,
\tilde{\lambda})=|\epsilon e|< \frac{1}{2\sqrt{|V|}} \delta |e| =
\frac{1}{2} \delta$. Then, $\lambda \in \Delta_R$ and $\lambda\geq
\epsilon e$.

Now, let $Q(\lambda)=\{S|S \subseteq V, S \text{ with low rank},
\lambda \in \Gamma_S\}$. We will next construct a sequence of
low-rank node sets, $S_i$, for $i = 1, 2, \cdots$. Since each of
them has a low rank, there exists an $|S_i|$-dimensional vector
$g^i \neq 0$ with $||g^i||=1$ such that $(g^i)' e =0$ and $(g^i)'
M_{S_i}=0$. We then extend each $g^i$ to a $|V|$-dimensional
vector by setting the values of the new components to be zero.
With a little abuse of notation, we call this $|V|$-dimensional
vector $g^i$ as well.

We now construct the sequence of $S_i$. If $Q(\lambda) \neq
\emptyset$, pick any subset $S_1 \in Q(\lambda)$. Let
$\lambda^1=\lambda+1/2 \epsilon g^{1}$. Next, if $Q(\lambda^1)
\neq \emptyset$, pick any $S_2 \in Q(\lambda^1)$ and let
$\lambda^2=\lambda^1+1/2^2 \epsilon g^{2}$. In step $j$, if
$Q(\lambda^{j-1}) \neq \emptyset$, we will pick any $S^j \in
Q(\lambda^{j-1})$ and let $\lambda^j= \lambda^{j-1}+1/2^j \epsilon
g^{j}$.  This procedure will go on until $Q(\lambda^j)$ becomes
empty for some $j$. We can check that the $i$th component of
$\lambda^j$ is $\lambda^j(i)=(\lambda+1/2\epsilon g^{1}+1/2^2
\epsilon g^{2}+ \cdots +1/2^j \epsilon g^{j})(i) \geq \epsilon
-1/2 \epsilon-1/2^2 \epsilon- \cdots -1/2^j \epsilon \geq 0$. This
ensures that $\lambda^j$ is always a non-negative vector for all
$j$.

Now, we will show that there exists an integer $K \geq 0$ such
that $Q(\lambda^K)$ becomes empty for the first time (hence, the
sequence of $S_j$ ends at $S_{K-1}$, or contains no elements if
$K=0$). For convenience, let $\lambda^0 = \lambda$.

We will show that $S_j \not \in Q(\lambda^k)$ for $k \geq
j$, where $S_j \neq \emptyset$. Suppose $S_j \in Q(\lambda^k)$ for some $k \geq j$. Then,
$\lambda^k \in \Gamma_{S_j}$, which implies that
$[\lambda^k]_{S_j}=d^1 e+\nu^1$ for some $d^1 \geq 0$ and $\nu^1
\in Co(M_{S_j})$. From the construction procedure, we know that
$S_j \in Q(\lambda^{j-1})$, which implies that
$[\lambda^{j-1}]_{S_j}=d^2 e+\nu^2$ for some $d^2 \geq 0$ and
$\nu^2 \in Co(M_{S_j})$. Since 
\begin{align}
\lambda^k&=\lambda^{j-1}+1/2^j
\epsilon g^{j}+ 1/2^{j+1} \epsilon g^{{j+1}}+ \cdots +1/2^k
\epsilon g^{k}, \nonumber
\end{align} 
we have
\begin{align}
[\lambda^k]_{S_j}&=d^2 e+\nu^2 + 
 [1/2^j \epsilon
g^{j}+ 1/2^{j+1} \epsilon g^{{j+1}}+ \cdots +1/2^k \epsilon
g^{k}]_{S_j}. \nonumber 
\end{align}
Then, 
\begin{align}
(g^{j})' [\lambda^k]_{S_j} 
= & (g^{j})'(d^2 e+\nu^2+ 
 [1/2^j \epsilon g^{j}+ 1/2^{j+1} \epsilon g^{{j+1}}+
\cdots +1/2^k \epsilon g^{k}]_{S_j}) \nonumber \\
= & 1/2^j \epsilon
||g^{j}||^2+1/2^{j+1} \epsilon (g^{j})' [g^{{j+1}}]_{S_j}+ \cdots 
 +1/2^k \epsilon (g^{j})' [g^{k}]_{S_j} \nonumber \\ 
\geq & 1/2^j
\epsilon-1/2^{j+1} \epsilon- \cdots -1/2^k \epsilon > 0. \nonumber
\end{align}
However, since
$[\lambda^k]_{S_j}=d^1 e+\nu^1$, we have $(g^{j})'
[\lambda^k]_{S_j}=(g^{j})'(d^1 e+\nu^1)=0$, leading to a
contradiction. Hence, $S_j \not \in Q(\lambda^k)$ for $k \geq j$.

In summary, each non-empty $S_j$ in the constructed sequence is in
$Q(\lambda^{j-1})$ but not in $Q(\lambda^k)$ for $k \geq j$.
Hence, each $S_j$ is distinct. Since there is a finite number of non-empty
node sets $S \subseteq V$, there exists an integer $K \geq 0$ such
that $Q(\lambda^K)$ becomes empty for the first time.

Then, $\lambda^K \not \in \Gamma_S$ for any node set $S$ with a
low rank. Hence, $\lambda^K \in \bigcap_{S \subseteq V, S \text{
with low rank}} \Gamma^c_S$. The distance between $\lambda$ and
$\lambda^K$ is $d(\lambda, \lambda^K) \leq \epsilon
(1/2+1/2^2+...+1/2^K)< \epsilon$. Then, the distance between
$\tilde{\lambda}$ and $\lambda^K$ is $d(\tilde{\lambda},
\lambda^K)\leq d(\tilde{\lambda}, \lambda)+d(\lambda,
\lambda^K)\leq \epsilon \sqrt{|V|} +\epsilon< \delta$. Hence,
$\lambda^K \in \Delta_R$. It follows $\lambda^K \in \bigcap_{S
\subseteq V}\Gamma^c_S=\Delta_C$.

Since $\epsilon$ can be chosen arbitrarily small,
$\tilde{\lambda}$ is a limit point of $\Delta_C$. Thus,
$\tilde{\lambda} \in \overline{\Delta_C}$, implying
$(\Delta_R-\Delta_C) \subseteq \overline{\Delta_C}$. Hence,
$\overline{(\Delta_R-\Delta_C)} \subseteq \overline{\Delta_C}$.
\end{proof}

The following is an intermediary lemma.
\begin{lemma} \label{lem:lambdagamma}
If a non-empty set $S \subseteq V$ satisfies $\sigma^*_S=1$, then $\Gamma_S \bigcap \Lambda^o=\emptyset$.
\end{lemma}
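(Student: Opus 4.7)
The plan is to proceed by contradiction. Suppose there exists $\lambda \in \Gamma_S \cap \Lambda^o$. From $\lambda \in \Gamma_S$, Definition \ref{def:uniformposidomin} gives some $\nu \in Co(M_S)$ and a scalar $d \geq 0$ with $[\lambda]_S = \nu + de$. From $\lambda \in \Lambda^o$, Lemma \ref{lem:subsetcapacity} gives $[\lambda]_S \in \Lambda^o_S$, which unpacks to the existence of some $\tilde{\mu} \in Co(M_S)$ with $[\lambda]_S < \tilde{\mu}$ componentwise strictly.

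Next I would chain these two facts to produce a pair of elements of $Co(M_S)$ that strictly dominate one another. Since $d \geq 0$ and $e > 0$,
\begin{align*}
\nu \;\leq\; \nu + de \;=\; [\lambda]_S \;<\; \tilde{\mu},
\end{align*}
so $\nu < \tilde{\mu}$ strictly componentwise, with both $\nu, \tilde{\mu} \in Co(M_S)$.

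The final step is to observe that this contradicts $\sigma^*_S = 1$. From $\nu < \tilde{\mu}$ componentwise strictly and $\nu \in Co(M_S) \subseteq \mathbb{R}_+^{|S|}$, each $\tilde{\mu}_i > \nu_i \geq 0$, so $\sigma_0 := \max_{i \in S} \nu_i / \tilde{\mu}_i$ is well-defined and $\sigma_0 < 1$. Then $\sigma \tilde{\mu} > \nu$ for every $\sigma \in (\sigma_0, 1]$, so no such $\sigma$ lies in the set whose supremum defines $\sigma^*_S$ in (\ref{eq:setsigsup}). Hence $\sigma^*_S \leq \sigma_0 < 1$, contradicting the hypothesis $\sigma^*_S = 1$. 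I do not anticipate any real obstacle: the argument is a short chain of Definition \ref{def:uniformposidomin}, Lemma \ref{lem:subsetcapacity}, and the supremum characterization of $\sigma^*_S$, and the only mild subtlety is keeping the strict versus non-strict inequalities straight.
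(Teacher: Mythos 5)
Your proposal is correct and follows essentially the same route as the paper's proof: both argue by contradiction, use Definition \ref{def:uniformposidomin} and Lemma \ref{lem:subsetcapacity} to produce $\nu, \tilde{\mu} \in Co(M_S)$ with $\nu \leq [\lambda]_S < \tilde{\mu}$, and conclude $\sigma^*_S < 1$. The only (immaterial) difference is the last algebraic step: the paper extracts an explicit factor via $\mu \geq \nu + (d+\epsilon)e \geq (1+d+\epsilon)\nu$ (using $\nu \leq e$), whereas you take $\sigma_0 = \max_{i} \nu_i/\tilde{\mu}_i < 1$; both immediately contradict $\sigma^*_S = 1$.
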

\begin{proof}
Suppose there exists a vector $\lambda \in \Gamma_S \bigcap
\Lambda^o$. By Definition \ref{def:uniformposidomin},
$[\lambda]_S=\nu+d e$ for some $d \geq 0$ and $\nu \in Co(M_S)$.
Since $\lambda \in \Lambda^o$, by Lemma \ref{lem:subsetcapacity},
$[\lambda]_S+\epsilon e \leq \mu$ for some $\mu \in Co(M_S)$ and a
small enough $\epsilon
> 0$. Hence, $\mu \geq \nu+(d+\epsilon)e \geq \nu(1+d+\epsilon)$.
Thus, $\sigma^*_S<1$ and we arrive at a contradiction.
\end{proof}

\begin{lemma} \label{fullcapacity}
If every high-rank node set $S
\subseteq V$ satisfies $\sigma^*_S=1$, then,
$\overline{\Delta_C}=\overline{\Delta_R}=\Lambda$ and $\Delta_R = \Lambda^o$.
\end{lemma}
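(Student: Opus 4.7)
The plan is to reduce the entire statement to showing $\Delta_R = \Lambda^o$; the closure assertion then follows from Theorem \ref{thm:deltacr} together with the standard fact that $\Lambda$ is a full-dimensional polytope in $\mathbb{R}_+^{|V|}$, so $\overline{\Lambda^o} = \Lambda$. I will establish the two inclusions $\Lambda^o \subseteq \Delta_R$ and $\Delta_R \subseteq \Lambda^o$ separately, and the hypothesis that every high-rank $S$ satisfies $\sigma_S^* = 1$ enters in the first inclusion.

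For $\Lambda^o \subseteq \Delta_R$, I would proceed by contradiction. Pick $\lambda \in \Lambda^o$ and suppose $\lambda \notin \Delta_R$. Then by Definition \ref{def:deltasregion} there exists a high-rank set $S \subseteq V$ with $\lambda \in \Gamma_S$. The hypothesis of the lemma gives $\sigma_S^* = 1$, and Lemma \ref{lem:lambdagamma} immediately yields $\Gamma_S \cap \Lambda^o = \emptyset$, contradicting $\lambda \in \Gamma_S \cap \Lambda^o$. Hence $\lambda \in \Delta_R$.

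For $\Delta_R \subseteq \Lambda^o$, I first note from Lemma \ref{gammacloseddeltaopen} that each $\Gamma_S$ is closed, so $\Delta_R$ is open in $\mathbb{R}_+^{|V|}$. It therefore suffices to show $\Delta_R \subseteq \Lambda$, because any open (relative to $\mathbb{R}_+^{|V|}$) subset of the closed set $\Lambda$ must lie inside its interior $\Lambda^o$. To see $\Delta_R \subseteq \Lambda$, take any $\lambda \in \Delta_R$; Theorem \ref{thm:deltasstability} (applied under A1 with non-zero but finite variances) says LQF stabilizes the network at rate $\lambda$, and the basic capacity-region fact recalled after Lemma \ref{lem:subsetcapacity}, namely that no rate vector outside $\Lambda$ is stabilizable by any policy, forces $\lambda \in \Lambda$. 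Combining both inclusions gives $\Delta_R = \Lambda^o$, whence $\overline{\Delta_R} = \overline{\Lambda^o} = \Lambda$ and finally $\overline{\Delta_C} = \overline{\Delta_R} = \Lambda$ via Theorem \ref{thm:deltacr}.

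The delicate step is the second inclusion $\Delta_R \subseteq \Lambda^o$: I am invoking the operational Theorem \ref{thm:deltasstability} to deduce what is essentially a geometric containment. A purely geometric proof would require, for each $\lambda \notin \Lambda$, the explicit construction of a high-rank $S$ together with a decomposition $[\lambda]_S = \nu + d e$ with $\nu \in Co(M_S)$ and $d \geq 0$, exploiting the hypothesis that every high-rank set has $\sigma_S^* = 1$. Attempting this by taking $\mu^* \in \arg\min_{\mu \in Co(M_V)} \max_l (\lambda_l - \mu_l)$ and $S$ equal to the tight-constraint set looks promising but stumbles on the fact that $[\mu^*]_S$ need only sit in $\Lambda_S$, not in $Co(M_S)$, so the decomposition is not automatic. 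Routing through stabilizability bypasses that difficulty cleanly and is the expeditious path given the machinery already built in the paper.
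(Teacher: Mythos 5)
Your proof is correct, and its core is the same as the paper's: the inclusion $\Lambda^o \subseteq \Delta_R$ is obtained exactly as in the paper, by combining the hypothesis $\sigma^*_S = 1$ for every high-rank $S$ with Lemma \ref{lem:lambdagamma} to get $\Gamma_S \cap \Lambda^o = \emptyset$, and the closure statement then falls out of Theorem \ref{thm:deltacr} together with $\overline{\Lambda^o} = \Lambda$. The one place you diverge is the reverse inclusion. The paper simply asserts that ``$\Delta_R$ is an open set in $\Lambda$'' and invokes the maximality of $\Lambda^o$ among open subsets of $\Lambda$; the containment $\Delta_R \subseteq \Lambda$ is taken for granted there (as it also is in the proof of Lemma \ref{lem:dconvexcond}). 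You instead prove $\Delta_R \subseteq \Lambda$ operationally, via Theorem \ref{thm:deltasstability} and the converse capacity result quoted after Lemma \ref{lem:subsetcapacity}. That detour is legitimate and non-circular --- Theorem \ref{thm:deltasstability} does not depend on this lemma, and for any $\lambda \in \Delta_R$ one can exhibit a bounded i.i.d.\ arrival process with mean $\lambda$, positive variance, and the large-deviation bound A1, so stabilizability forces $\lambda \in \Lambda$. What the operational route buys is an explicit justification of a step the paper leaves implicit; what it costs is reliance on a theorem whose proof is only sketched, where a purely geometric argument (showing that every $\lambda \notin \Lambda$ lies in $\Gamma_S$ for some high-rank $S$) would be self-contained. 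Your own diagnosis of why the direct geometric construction via the tight set of $\arg\min_{\mu \in Co(M_V)} \max_l (\lambda_l - \mu_l)$ is not automatic --- the restriction $[\mu^*]_S$ need not be maximal with respect to $G_S$ --- is accurate; it can be repaired with an exchange argument at the optimum, but you are right that the stabilizability shortcut is cleaner given the machinery already in place.
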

\begin{proof}
According to Definition \ref{def:deltasregion}, we have
$\Delta_R=\bigcap_{S \subseteq V, S \text{ with high rank}}
\Gamma_S^c$. For any high-rank node set $S$, since $\sigma^*_S=1$,
we have $\Gamma_S \bigcap \Lambda^o=\emptyset$ by Lemma
\ref{lem:lambdagamma}, which implies $\Gamma^c_S \bigcap
\Lambda^o=\Lambda^o$. Hence, $\Delta_R \bigcap
\Lambda^o=\bigcap_{S \subseteq V, S \text{ with high rank}}
\Gamma_S^c \bigcap \Lambda^o=\Lambda^o$. Combining this with
Theorem \ref{thm:deltacr}, we get
$\overline{\Delta_C}=\overline{\Delta_R}=\overline{\Lambda^o}=\Lambda$. Also, the fact that $\Delta_R \bigcap \Lambda^o = \Lambda^o$ implies $\Lambda^o \subseteq \Delta_R$. Since $\Delta_R$ is an open set in $\Lambda$ and $\Lambda^o$ is the largest open set in $\Lambda$, it must be that $\Delta_R = \Lambda^o$.
\end{proof}

\noindent  {\bf Remark.} From Lemma \ref{fullcapacity}, we know that when all the subsets $S
\subseteq V$ satisfy either set local pooling (i.e., $\sigma^*_S=1$) or the rank of $S$ is
low, then $\Delta_R = \Lambda^o$. That is, the entire $\Lambda^o$ is achievable by LQF, assuming the arrival processes have non-zero variances. This is the same statement as Theorem 1 of \cite{DW06}. Thus, the newly developed theory here is able to reproduce the main result of \cite{DW06}.

\begin{lemma} \label{lem:dconvexcond}
$\overline{\Delta_C}=\Lambda$ if and only if $\overline{\Delta_C}$ is convex. Similarly, $\overline{\Delta_R}=\Lambda$ if and only if $\overline{\Delta_R}$ is convex.
\end{lemma}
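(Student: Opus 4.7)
The forward direction is immediate in both cases since $\Lambda$ is convex. By Theorem \ref{thm:deltacr}, $\overline{\Delta_C}=\overline{\Delta_R}$, so it suffices to establish the reverse direction for $\overline{\Delta_C}$; the statement for $\overline{\Delta_R}$ follows automatically. The plan, therefore, is to assume $\overline{\Delta_C}$ is convex and deduce $\overline{\Delta_C}=\Lambda$.

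One inclusion is easy: every $\lambda \in \Delta_C$ is stabilizable under LQF by Theorem \ref{thm:deltacstability}, so $\Delta_C \subseteq \Lambda$, and since $\Lambda$ is closed this gives $\overline{\Delta_C}\subseteq \Lambda$. For the reverse inclusion I will invoke the structural description of $\Lambda$ from Lemma \ref{indepextrem}: the polytope $\Lambda$ is the convex hull of its extreme points, and each extreme point is the characteristic vector of an independent set of $G$. Since $\overline{\Delta_C}$ is assumed convex, it then suffices to show that every such extreme point $\lambda^*$ lies in $\overline{\Delta_C}$.

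To do this I will argue that $t\lambda^* \in \Delta_C$ for all $t \in (0,1)$, so that $\lambda^* = \lim_{t \uparrow 1} t\lambda^* \in \overline{\Delta_C}$. Let $I \subseteq V$ be the independent set whose characteristic vector is $\lambda^*$, fix a non-empty $S \subseteq V$, and set $J = I \cap S$, which is the support of $[\lambda^*]_S$. Suppose toward contradiction that $t[\lambda^*]_S = \nu + d e$ with $\nu \in Co(M_S)$ and $d \geq 0$. In the main case $\emptyset \neq J \subsetneq S$, the zero entries of $[\lambda^*]_S$ force $d=0$ and $\nu = t[\lambda^*]_S$, so in any representation $\nu = \sum_i \alpha_i m^i$ with $m^i \in M_S$ and $\alpha_i \geq 0$, every $m^i$ with $\alpha_i > 0$ must be supported inside $J$. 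The key observation is that because $J$ is an independent set of $G_S$ (as $I$ is independent in $G$), any maximal schedule of $G_S$ supported inside $J$ must coincide with $[\lambda^*]_S$ itself—otherwise some node of $J$ could be added without violating independence, contradicting maximality. This forces $\nu = [\lambda^*]_S$ and $t=1$, a contradiction; and if $J$ is not even a maximal independent set of $G_S$, then no such $m^i$ exists in $M_S$, giving an immediate contradiction. The degenerate cases $J=\emptyset$ (where $t[\lambda^*]_S = 0 \notin Co(M_S)$ since every maximal schedule of a non-empty $G_S$ is non-zero) and $J=S$ (where $G_S$ is edgeless, $Co(M_S)=\{e\}$, and $te = e + (t-1)e$ would require $d = t-1 < 0$) are then handled by short direct arguments.

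I expect the main obstacle to be the case analysis showing $t\lambda^* \notin \Gamma_S$, and within it the pivotal step is the maximality argument: a maximal schedule of $G_S$ whose support sits inside the independent set $J$ cannot omit any node of $J$. Once this is established, convexity of $\overline{\Delta_C}$ together with inclusion of all extreme points of $\Lambda$ gives $\Lambda \subseteq \overline{\Delta_C}$, which combined with the easy inclusion closes the argument.
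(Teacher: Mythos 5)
Your proposal is correct, and its overall skeleton (forward direction trivial; $\overline{\Delta_C}\subseteq\Lambda$ by closedness of $\Lambda$; then show all extreme points of $\Lambda$ lie in $\overline{\Delta_C}$ and invoke convexity) matches the paper's. The difference is in how the extreme points are shown to lie in $\overline{\Delta_C}$. The paper reuses its earlier machinery: by Lemma \ref{lem:phiids} and Corollary \ref{cor:extpt} every extreme point is in $\Omega$, and from $\Omega^o\subseteq\Delta_C$ it concludes $\Omega\subseteq\overline{\Delta_C}$; it also handles $\Delta_R$ by saying the argument is similar, whereas you route it through Theorem \ref{thm:deltacr} ($\overline{\Delta_C}=\overline{\Delta_R}$), which is equally valid. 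You instead prove directly that $t\lambda^*\in\Delta_C$ for every $t\in(0,1)$ and every characteristic vector $\lambda^*$ of an independent set $I$, via a case analysis on $J=I\cap S$ showing $t\lambda^*\notin\Gamma_S$; your maximality argument (a maximal schedule of $G_S$ supported inside the independent set $J$ must equal $\chi_J$, or else no such schedule exists) is sound, and the degenerate cases $J=\emptyset$ and $J=S$ are handled correctly. What your route buys is self-containedness and the avoidance of the step ``$\Omega^o\subseteq\Delta_C$ implies $\Omega\subseteq\overline{\Delta_C}$,'' which implicitly requires $\Omega\subseteq\overline{\Omega^o}$ --- true here but not spelled out in the paper; you also get the slightly stronger fact that the entire open segment $\{t\lambda^*: 0<t<1\}$ sits inside $\Delta_C$ itself. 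What the paper's route buys is brevity, since Corollary \ref{cor:extpt} is already on the shelf.
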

\begin{proof}
It is obvious that $\overline{\Delta_C}=\Lambda$ implies $\overline{\Delta_C}$ is convex. We will next show the converse.
Since $\Delta_C \subseteq \Lambda$ and $\Lambda$ is a closed set, we have $\overline{\Delta_C} \subseteq \Lambda$. Because $\Omega^o \subseteq \Delta_C$, we have $\Omega \subseteq \overline{\Delta_C}$. Since $\Omega$ contains all the extreme points of $\Lambda$ (Corollary \ref{cor:extpt}), $\overline{\Delta_C}$ also contains all of them. Since $\Lambda$ is the convex combination of all its extreme points and $\overline{\Delta_C}$ is convex, we must have $\Lambda \subseteq \overline{\Delta_C}$.

The second statement can be proved similarly.
\end{proof}


%


\section{Graph coloring and LQF scheduling}
\label{sec:hypergraph}

The scheduling problem in this paper is deeply connected with graph
coloring and its related problems. In this section, we
will introduce fractional coloring, and more generally, aspects of the fractional graph theory that can provide useful tools for studying the stability regions discussed in the previous sections.

\subsection{Fractional Coloring and Capacity Region}
\label{sec:fcandcap}

The chromatic number of a graph $G$, denoted by $\chi(G)$, is the
minimum number of colors needed to paint the nodes so that the connected
nodes do not share the same color. When we relax the integrality
constraints of the chromatic number problem and introduce a parameter $\lambda \in \mathbb{R}_+^{|V|}$, we obtain the
following linear programming (LP) problem.

\begin{definition}
Given a graph $G=(V,E)$ and $\lambda \in \mathbb{R}_+^{|V|}$, the \emph{weighted
fractional coloring} problem with the weight vector $\lambda$ is:
\begin{align}
      \chi_f (G,\lambda) \triangleq \min \ e' \alpha,
      \text{ subject to } M_V \alpha \geq \lambda,
                        \alpha \geq 0. \label{eq:wfc}
\end{align}
\end{definition}

The optimal value of the above problem, $\chi_f (G,\lambda)$, is called the weighted
fractional chromatic number, which is known to be related to
the capacity region as follows (see \cite{Cem10}):
\begin{align}
\Lambda=\{\lambda \in \mathbb{R}_+^{|V|} \ | \ \chi_f(G, \lambda) \leq 1\}. \label{eq:capwfc}
\end{align}
Based on (\ref{eq:wfc}), $\chi_f (G,\lambda)$ can be interpreted as the fastest way of serving
queued data when the queue sizes are proportional to the weights $\lambda$. Based on (\ref{eq:capwfc}), $\chi_f (G,\lambda)$ can be interpreted as the `traffic load' to the network.

The relevance and usefulness of this problem to the study of wireless scheduling have been amply demonstrated in \cite{Cem10}. The characterization of the capacity region by (\ref{eq:capwfc}) suggests that the fractional chromatic
number can serve as an oracle for judging whether an arrival rate
vector is in the capacity region or not. With this observation and with known complexity results about the fractional coloring problem, the authors of \cite{Cem10} have derived results about the inherent complexity of the wireless scheduling problem.

\eat{
The proof for the next lemma is straightforward. ({\bf Ye: Is
this used anywhere?})
\begin{lemma} \label{lem:chromsubset}
Given a graph $G=(V, E)$, let $S \subseteq V$ be an arbitrary
non-empty node set. The weighted fractional chromatic number of
the node-induced subgraph $G_S$ is no greater than that of $G$,
i.e.,
\begin{align*}
\chi_f(G_S, [\lambda]_S) \leq \chi_f(G, \lambda), \forall S
\subseteq V, S \neq \emptyset, \text{ and } \forall \lambda \geq
0.
\end{align*}
\end{lemma}
}

\subsection{Weighted Fractional Matching Number and $\Omega$ Region}

%
We next discuss the problem of finding the weighted fractional
matching number of a graph \cite{SU97}. This problem can help to
decide whether a vector is in $\Omega$. 


\begin{definition} \label{def:wfm}
Given a graph $G=(V,E)$ and $\lambda \in \mathbb{R}_+^{|V|}$, the \emph{weighted
fractional matching number} problem with the weight vector
$\lambda$ is:
\begin{align}
      \phi_f (G,\lambda) \triangleq \max \ e' \beta,
      \text{ subject to } M_V \beta \leq \lambda,
                        \beta \geq 0. \label{eq:wfm}
\end{align}
\end{definition}

The above problem is the Lagrangian dual of the weighted
fractional transversal number problem, which is the hypergraph
dual problem of the weighted fractional coloring problem
\cite{SU97}. Here, the $i$th component of $\beta$ can be interpreted as
the amount of time for which the $i$th maximal schedule is used.
The weighted fractional
matching number, $\phi_f (G,\lambda)$, can be interpreted as the slowest way of serving
the queued data (in the amount $\lambda$) using only the maximal
schedules, subject to the additional constraint that a schedule
should not be selected if it activates a link associated with an empty queue.
%

\begin{lemma} \label{lem:omegatransversal}
The $\Omega$ region satisfies the following:
\begin{align*}
 \Omega=\{\lambda \in \mathbb{R}_+^{|V|} \ | \ \phi_f(G_S, [\lambda]_S) \leq 1, \forall S \subseteq V, S \neq \emptyset\}.
\end{align*}
\end{lemma}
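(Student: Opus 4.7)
The plan is to prove the set equality by two inclusions, working with the equivalent descriptions $\Omega = \bigcap_{S \neq \emptyset} \Pi_S^c$ and (by LP definition) $\phi_f(G_S, [\lambda]_S) > 1$ iff there exists $\beta \geq 0$ with $M_S \beta \leq [\lambda]_S$ and $e'\beta > 1$. Throughout I work by contrapositive.

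For the inclusion $\{\lambda \mid \phi_f(G_S, [\lambda]_S) \leq 1, \forall S \neq \emptyset\} \subseteq \Omega$, I would start from $\lambda \notin \Omega$. Then $\lambda \in \Pi_S$ for some non-empty $S$, so $[\lambda]_S > \nu$ componentwise and strictly for some $\nu = M_S \alpha$ with $\alpha \geq 0$, $e'\alpha = 1$. Note $\nu \neq 0$, because $\alpha \neq 0$ and every column of $M_S$ (a maximal schedule of the non-empty graph $G_S$) has at least one entry equal to $1$. Then the scalar $c := \min_{l \in S :\, \nu_l > 0} \lambda_l / \nu_l$ is well-defined and strictly greater than $1$. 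Setting $\beta := c \alpha$ gives $M_S \beta = c\nu \leq [\lambda]_S$ with $e'\beta = c > 1$, so $\phi_f(G_S, [\lambda]_S) > 1$, the desired contrapositive.

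For the reverse inclusion, I would start from $\phi_f(G_T, [\lambda]_T) > 1$ for some non-empty $T$ and exhibit a non-empty set $T'$ with $\lambda \in \Pi_{T'}$. Pick a feasible $\beta \geq 0$ with $M_T \beta \leq [\lambda]_T$ and $e'\beta > 1$, and define $T' := \{l \in T : \lambda_l > 0\}$. The main obstacle is that $[\lambda]_T$ may have zero components, which blocks a direct rescaling to a strict inequality; handling this is done by a projection-to-positive-support argument. I would establish two claims: (i) every maximal schedule $s$ of $G_T$ in the support of $\beta$ satisfies $s(l) = 0$ for all $l \in T \setminus T'$, which follows by reading off $(M_T \beta)_l \leq \lambda_l = 0$ for $l \in T \setminus T'$; (ii) the projection $[s]_{T'}$ of such an $s$ is still maximal in $G_{T'}$, because any $l \in T'$ addable in $G_{T'}$ would also be addable in $G_T$ (since all $1$-entries of $s$ lie in $T'$), contradicting maximality in $G_T$. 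Also $T' \neq \emptyset$, otherwise $M_T \beta = 0$ would force $\beta = 0$, contradicting $e'\beta > 1$.

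Using (i)–(ii), I would aggregate $\beta$ along the projection map to produce a vector $\tilde{\beta} \geq 0$ indexed by the maximal schedules of $G_{T'}$: for each maximal schedule $t$ of $G_{T'}$, set $\tilde{\beta}_t := \sum_{s : [s]_{T'} = t} \beta_s$. A direct computation gives $(M_{T'} \tilde{\beta})_l = (M_T \beta)_l \leq \lambda_l$ for $l \in T'$, and $e'\tilde{\beta} = e'\beta > 1$. Now $[\lambda]_{T'} > 0$ componentwise by construction of $T'$, so setting $\alpha := \tilde{\beta}/e'\tilde{\beta}$ yields $\alpha \geq 0$, $e'\alpha = 1$, and $M_{T'} \alpha \leq [\lambda]_{T'}/e'\tilde{\beta} < [\lambda]_{T'}$ strictly. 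Taking $\nu := M_{T'} \alpha \in Co(M_{T'})$, we obtain $[\lambda]_{T'} > \nu$, so $\lambda \in \Pi_{T'}$ and therefore $\lambda \notin \Omega$. This completes the contrapositive and hence the lemma.
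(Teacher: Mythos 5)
Your proof is correct and follows essentially the same route as the paper's: both directions convert between strict componentwise dominance over $Co(M_S)$ and an LP-feasible $\beta$ with $e'\beta>1$, and both handle the zero-component obstruction by projecting onto a smaller non-empty subset and using the fact that the relevant maximal schedules project to maximal schedules of the induced subgraph. The only cosmetic difference is that the paper projects onto the support of $\nu$ (removing the set $Z$ where $\nu$ vanishes) whereas you project onto the support of $\lambda$; your write-up is in fact more explicit about why the projected schedules remain maximal, a point the paper leaves implicit here.
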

\begin{proof}
Consider any vector $\lambda \in \Omega$ and an arbitrary
non-empty node set $S \subseteq V$. Suppose $\phi_f(G_S,
[\lambda]_S)$ $>1$. Then, by Definition \ref{def:wfm}, we have $[\lambda]_S \geq k \nu$ for some $\nu \in
Co(M_S)$ and $k>1$. Let $Z$ be the largest subset in $S$ such that
$[\nu]_Z=0$. Note that $Z \neq S$. Then, the vector $[\nu]_{S-Z} \in Co(M_{S-Z})$ and
$[\nu]_{S-Z}>0$. Hence, $[\lambda]_{S-Z}\geq k [\nu]_{S-Z}
>[\nu]_{S-Z}$, which implies that $\lambda \in \Pi_{S-Z}$. According
to Definition \ref{def:posidomin} and \ref{def:omegaregion},
$\lambda \not \in \Omega$.

Conversely, suppose a vector $\lambda$ satisfies $\phi_f(G_S,
[\lambda]_S) \leq 1$ for every non-empty $S \subseteq V$. Then,
$[\lambda]_S \not
> \nu$ for any $\nu \in Co(M_S)$. Otherwise, there would exist $k>1$
such that $[\lambda]_S \geq k \nu$ for some $\nu \in Co(M_S)$, which implies that
$\phi_f(G_S, [\lambda]_S) >1$. Thus, $\lambda \in \Omega$.
\end{proof}

\subsection{Hypergraph Duality and Set $\sigma$-local Pooling}

We next relate the ratio of $\chi_f (G,\lambda)$ and $\phi_f (G,\lambda)$ to the set $\sigma$-local pooling factor. First, we have the following lemma.


\begin{lemma} \label{lem:philinear}
For any $k \geq 0$, $\chi_f(G, k \lambda)=k \chi_f(G, \lambda)$ and $\phi_f(G, k
\lambda)=k \phi_f(G, \lambda)$.
\end{lemma}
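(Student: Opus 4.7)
The plan is to exploit the fact that both $\chi_f$ and $\phi_f$ are defined as the optimal values of linear programs whose constraint regions depend linearly (and homogeneously) on the weight vector $\lambda$, while the objectives do not involve $\lambda$ at all. So the lemma amounts to the standard homogeneity property of such LPs.

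First, I would handle the trivial case $k = 0$ separately: for $\chi_f(G, 0\cdot\lambda) = \chi_f(G, 0)$, the constraint $M_V\alpha \geq 0$ is satisfied by $\alpha = 0$, which makes the objective $0$, and since $\alpha \geq 0$ forces $e'\alpha \geq 0$, the optimum is $0 = 0 \cdot \chi_f(G,\lambda)$. The same argument applies to $\phi_f$, noting that $\beta = 0$ is feasible and any feasible $\beta$ must have $e'\beta \geq 0$, so the maximum is $0$.

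For $k > 0$, I would establish the identity $\chi_f(G, k\lambda) = k\chi_f(G, \lambda)$ by exhibiting a bijection between the feasible sets of the two programs. Specifically, if $\alpha$ is feasible for the program with weight $\lambda$, i.e., $M_V\alpha \geq \lambda$ and $\alpha \geq 0$, then $k\alpha$ is feasible for the program with weight $k\lambda$ since $M_V(k\alpha) = k(M_V\alpha) \geq k\lambda$ and $k\alpha \geq 0$. Conversely, given any feasible $\alpha'$ for the $k\lambda$-program, the vector $\alpha'/k$ is feasible for the $\lambda$-program. The objective values transform as $e'(k\alpha) = k(e'\alpha)$, so taking the infimum over both feasible sets yields $\chi_f(G, k\lambda) = k\chi_f(G, \lambda)$. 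The argument for $\phi_f$ is identical in structure: feasibility of $\beta$ for weight $\lambda$ (i.e., $M_V\beta \leq \lambda$, $\beta \geq 0$) is equivalent to feasibility of $k\beta$ for weight $k\lambda$, and the objective scales by $k$, so taking the supremum yields $\phi_f(G, k\lambda) = k\phi_f(G, \lambda)$.

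There is no real obstacle here; the only minor care needed is to treat $k=0$ separately so that division by $k$ in the bijection argument is valid, and to confirm that the LPs are well-posed (bounded and feasible) for every $\lambda \in \mathbb{R}_+^{|V|}$ so that the optima are finite and the scaling of optimal values is legitimate. Feasibility of the $\chi_f$ program is clear because $e$ and the columns of $M_V$ together cover every node (any node belongs to some maximal independent set), and boundedness of the $\phi_f$ program follows from $M_V\beta \leq \lambda$ together with $\beta \geq 0$. Hence both sides of each equality are finite and the scaling argument goes through.
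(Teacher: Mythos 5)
Your proof is correct and follows essentially the same route as the paper's: scale a feasible (or optimal) solution by $k$ in each direction to obtain the two inequalities, with $k=0$ handled separately (the paper simply calls that case trivial). The one nitpick is in your $k=0$ argument for $\phi_f$: observing that every feasible $\beta$ has $e'\beta \geq 0$ only lower-bounds a maximum, whereas what you need is the upper bound, which follows in one line because $M_V\beta \leq 0$ together with $\beta \geq 0$ and the nonnegative, nonzero columns of $M_V$ forces $\beta = 0$.
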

\begin{proof}
The case of $k = 0$ is trivial. We only focus on the case of $k
> 0$. Suppose $\beta^*$ is an optimal solution to the problem in (\ref{eq:wfm}) for finding $\phi_f(G, \lambda)$. Then, $k \beta^*$ is feasible to the problem for finding $\phi_f(G, k
\lambda)$. Since $e' (k \beta^*)= k e'
\beta^*$, $\phi_f(G, k \lambda) \geq k \phi_f(G, \lambda)$.

Conversely, suppose $\tilde{\beta}$ is an optimal solution to the
problem for finding $\phi_f(G, k \lambda)$. Then, $\tilde{\beta}/k$ is feasible to the problem for finding $\phi_f(G, \lambda)$. Since
$e' (\tilde{\beta}/k)= e' \tilde{\beta}/k$, $\phi_f(G, \lambda) \geq 
\phi_f(G, k \lambda)/k$. Therefore, $\phi_f(G, k \lambda)=k
\phi_f(G, \lambda)$, for $k>0$. A similar argument can be used to show $\chi_f(G, k \lambda)=k \chi_f(G, \lambda)$.
\end{proof}




\begin{theorem} \label{thm:setsigmaratio}
Given a non-empty node set $S \subseteq V$, the set $\sigma$-local
pooling factor of $S$ satisfies the following\footnote{We take the convention
$a/0 = \infty$ for any scalar $a \geq 0$. Note that, if any component of $\lambda$ is equal to zero, then $\phi_f(G_S, \lambda) = 0$. As a result, the optimal solution $\lambda^*$ to (\ref{eq:sigsratio}) must satisfy $\lambda^* > 0$.}:
\begin{align}
\sigma^*_S=\min_{\lambda \geq 0} \frac{\chi_f(G_S, \lambda)}{\phi_f(G_S,
\lambda)}. \label{eq:sigsratio}
\end{align}
\end{theorem}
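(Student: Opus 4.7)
The plan is to establish the equality by two inequalities, exploiting the scaling structure of the two LPs and relating them to the LP for $\sigma^*_S$ in (\ref{eq:setsigmapro}).

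First I would reformulate both $\chi_f$ and $\phi_f$ in a form directly comparable to the constraint $\sigma \mu \geq \nu$ in (\ref{eq:setsigmapro}). Specifically, for $\lambda \geq 0$, every feasible $\alpha \geq 0$ with $M_S \alpha \geq \lambda$ and $e'\alpha = t > 0$ corresponds to $\mu = M_S \alpha / t \in Co(M_S)$ with $t \mu \geq \lambda$; conversely, any $\mu \in Co(M_S)$ with $t\mu \geq \lambda$ yields a feasible $\alpha$ of weight $t$. Hence $\chi_f(G_S, \lambda) = \min\{t \geq 0 : t\mu \geq \lambda \text{ for some } \mu \in Co(M_S)\}$, and analogously $\phi_f(G_S, \lambda) = \max\{s \geq 0 : s\nu \leq \lambda \text{ for some } \nu \in Co(M_S)\}$.

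For the $\leq$ direction, fix any $\lambda \geq 0$; if $\phi_f(G_S, \lambda) = 0$ the ratio is $\infty$ by convention and there is nothing to check. Otherwise, let $t^* = \chi_f(G_S, \lambda)$ and $s^* = \phi_f(G_S, \lambda)$, attained by some $\mu^*, \nu^* \in Co(M_S)$ with $t^*\mu^* \geq \lambda \geq s^*\nu^*$. Chaining gives $(t^*/s^*)\mu^* \geq \nu^*$, so the triple $(t^*/s^*, \mu^*, \nu^*)$ is feasible for (\ref{eq:setsigmapro}), yielding $\sigma^*_S \leq \chi_f(G_S, \lambda)/\phi_f(G_S, \lambda)$.

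For the $\geq$ direction, take optimizers $\sigma^*_S, \mu^*, \nu^*$ of (\ref{eq:setsigmapro}) (which exist by compactness of $Co(M_S)$ and finiteness of $M_S$) and choose the test vector $\lambda = \nu^*$. The inequality $\sigma^*_S \mu^* \geq \nu^* = \lambda$ exhibits a feasible primal of value $\sigma^*_S$ for the $\chi_f$ LP, so $\chi_f(G_S, \nu^*) \leq \sigma^*_S$; meanwhile $1 \cdot \nu^* \leq \nu^*$ exhibits a feasible $\beta$ of value $1$ in the $\phi_f$ LP, so $\phi_f(G_S, \nu^*) \geq 1$. Dividing gives the candidate ratio $\leq \sigma^*_S$, so the minimum over $\lambda$ is $\leq \sigma^*_S$.

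The main technical care needed is the edge cases near zero: one must justify that $\nu^*$ at the optimum of (\ref{eq:setsigmapro}) is nonzero (which follows because every maximal schedule of a non-empty $S$ has a coordinate equal to $1$, forcing $\sigma^*_S > 0$ and $\nu^* \neq 0$), and that $\phi_f(G_S, \nu^*) > 0$ so the ratio is well-defined. I expect no other obstacle: both inequalities are short LP manipulations, and the footnote convention $a/0 = \infty$ already handles degenerate $\lambda$ for which $\phi_f$ vanishes, guaranteeing the minimum in (\ref{eq:sigsratio}) is attained at a $\lambda^*$ with strictly positive components (e.g., $\lambda^* = \nu^*$).
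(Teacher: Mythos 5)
Your proposal is correct and follows essentially the same route as the paper: one direction chains the two LP optima via $t^*\mu^* \geq \lambda \geq s^*\nu^*$ to produce a feasible point of the $\sigma^*_S$ program, and the other uses the witness $\lambda = \nu^*$ with $\chi_f(G_S,\nu^*) \leq \sigma^*_S$ and $\phi_f(G_S,\nu^*) \geq 1$. The only cosmetic difference is that you prove the lower bound $\sigma^*_S \leq \chi_f/\phi_f$ for every $\lambda$ rather than only at the minimizer, which is a harmless strengthening.
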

\begin{proof}
Suppose $(\sigma^*_S, \mu^*_S, \nu^*_S)$ is an optimal solution to
the problem in (\ref{eq:setsigmapro}). Then,
we choose $\lambda=\nu^*_S$. Since $\sigma^*_S \mu^*_S \geq
\nu^*_S$, we have 
\begin{align}
\chi_f(G_S, \lambda) & = \chi_f(G_S, \nu^*_S) \nonumber \\
& \leq \chi_f(G_S, \sigma^*_S \mu^*_S) 
=\sigma^*_S \chi_f(G_S, \mu^*_S) \leq \sigma^*_S. \nonumber
\end{align}
The last inequality above uses the fact $\chi_f(G_S, \mu^*_S) \leq 1$, which follows from (\ref{eq:capwfc}) (since $\mu^*_S \in \Lambda_S$).

Because $\nu^*_S \in Co(M_S)$, there exists a
non-negative vector $\beta$ such that $M_S \beta= \nu^*_S$ and
$e' \beta=1$. Such $\beta$ is feasible to (\ref{eq:wfm}) for finding $\phi_f(G_S, \lambda)$. Hence, $\phi_f(G_S, \lambda) \geq 1$.
Therefore, $\min_{\lambda \geq 0} \chi_f(G_S, \lambda)/\phi_f(G_S,
\lambda) \leq \sigma^*_S$.

Next, suppose $\lambda^*$ is an optimal solution for the problem
$\min_{\lambda \geq 0} \chi_f(G_S, \lambda)/\phi_f(G_S, \lambda)$.
Suppose $\alpha^*$ and $\beta^*$ are optimal solutions for the
problems of finding $\chi_f(G_S, \lambda^*)$ and $\phi_f(G_S,
\lambda^*)$, respectively. Then, we have $\min_{\lambda \geq 0}
\chi_f(G_S, \lambda)/\phi_f(G_S, \lambda)=\sum_i \alpha^*_i/\sum_i
\beta^*_i$. Now, let $\mu= M_S \alpha^*/\sum_i \alpha^*_i$ and
$\nu= M_S \beta^*/\sum_i \beta^*_i$. Then, $\mu, \nu \in Co(M_S)$
and 
\begin{align}
\frac{\chi_f(G_S, \lambda^*)}{\phi_f(G_S,
\lambda^*)} \mu=\frac{\sum_i \alpha^*_i}{\sum_i \beta^*_i} \frac{M_S
\alpha^*}{\sum_i \alpha^*_i}= \frac{M_S \alpha^*}{\sum_i \beta^*_i}. \nonumber
\end{align}
By
the feasibility of $\alpha^*$ and $\beta^*$ to (\ref{eq:wfc}) and (\ref{eq:wfm}), respectively, $M_S \alpha^* \geq
\lambda^* \geq M_S \beta^*$. Hence, 
\begin{align}
\frac{M_S \alpha^*}{\sum_i \beta^*_i} \geq \frac{\lambda^*}{\sum_i \beta^*_i} \geq \frac{M_S \beta^*}{\sum_i \beta^*_i}= \nu. \nonumber
\end{align}
Thus, $\chi_f(G_S, \lambda^*)/\phi_f(G_S, \lambda^*)$ is feasible to the problem in
(\ref{eq:setsigmapro}). Therefore,
$\min_{\lambda \geq 0} \chi_f(G_S, \lambda)/\phi_f(G_S, \lambda)
\geq \sigma^*_S$.
\end{proof}

The theorem above shows that the set $\sigma$-local pooling factor is the same as the minimum of the hypergraph duality ratios over all different weights.

\subsection{Weighted Fractional Domination Number and $\Delta_C$ Region}

\begin{definition} \label{def:weightedfrac}
Given a graph $G=(V, E)$ and $\lambda \in \mathbb{R}_+^{|V|}$, the
\emph{weighted fractional domination number} problem with
the weight vector $\lambda$ is:
\begin{align}
      \tau_f (G,\lambda) \triangleq \max d,
      \text{ subject to } de+\nu=\lambda, 
                          \nu \in Co(M_V). \nonumber
\end{align}
\end{definition}
For convenience, let $\tau_f(G, \lambda)=-\infty$ when the problem
is infeasible.


By Definition \ref{def:uniformposidomin}, \ref{def:deltacregion}
and \ref{def:weightedfrac}, we have the following lemma.
\begin{lemma} \label{lem:deltatau}
The following relations hold: 
\begin{align*}
 &\Gamma_S=\{\lambda \in \mathbb{R}_+^{|V|} \ | \ \tau_f(G_S, [\lambda]_S) \geq
 0, \text{  for } S \subseteq V, S \neq \emptyset\}, \\
 &\Delta_C=\{\lambda \in \mathbb{R}_+^{|V|} \ | \ \tau_f(G_S, [\lambda]_S) < 0, \forall S \subseteq V,
 S \neq \emptyset\}.
\end{align*}
\end{lemma}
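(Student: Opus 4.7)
The plan is to unpack the definitions directly, showing that the characterization of $\Gamma_S$ is essentially a restatement of Definition \ref{def:uniformposidomin} phrased in terms of the LP in Definition \ref{def:weightedfrac}, and that the characterization of $\Delta_C$ then follows at once from Definition \ref{def:deltacregion} by taking complements.

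For the first equality, I would prove the two inclusions separately. For the forward direction, suppose $\lambda \in \Gamma_S$. By Definition \ref{def:uniformposidomin}, there exist $\nu \in Co(M_S)$ and a scalar $d \geq 0$ such that $[\lambda]_S = \nu + d e$. Then $(d,\nu)$ is feasible for the LP defining $\tau_f(G_S, [\lambda]_S)$ with objective value $d \geq 0$, which gives $\tau_f(G_S, [\lambda]_S) \geq d \geq 0$. For the converse, suppose $\tau_f(G_S, [\lambda]_S) \geq 0$. In particular, the LP is feasible; since $Co(M_S)$ is the convex hull of finitely many points and hence compact, and the feasibility constraint $d e + \nu = [\lambda]_S$ forces $d$ to lie in a bounded set, the maximum is attained at some $(d^*, \nu^*)$ with $d^* = \tau_f(G_S, [\lambda]_S) \geq 0$, $\nu^* \in Co(M_S)$, and $d^* e + \nu^* = [\lambda]_S$. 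This is exactly the condition in Definition \ref{def:uniformposidomin} for $\lambda \in \Gamma_S$.

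For the second equality, I would simply invoke Definition \ref{def:deltacregion}, which states that $\lambda \in \Delta_C$ iff $\lambda \in \Gamma_S^c$ for every non-empty $S \subseteq V$. By the first equality already proved, $\lambda \in \Gamma_S^c$ is equivalent to $\tau_f(G_S, [\lambda]_S) < 0$ (using the convention $\tau_f(G_S, [\lambda]_S) = -\infty$ when the LP is infeasible, which is treated as being less than $0$). Quantifying over non-empty $S$ then gives the desired characterization of $\Delta_C$.

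There is no serious obstacle: the argument is essentially a translation between the geometric language of ``uniformly dominating vectors'' and the LP language of $\tau_f$. The only point that deserves a brief justification is that when $\tau_f(G_S, [\lambda]_S) \geq 0$ the optimum is actually attained by a feasible pair $(d^*, \nu^*)$, so that we can read off an explicit witness for $\lambda \in \Gamma_S$; this follows from the compactness of $Co(M_S)$ since $M_S$ is a finite set of $0$-$1$ vectors.
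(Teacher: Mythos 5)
Your proof is correct and follows the same route as the paper, which simply asserts the lemma as an immediate consequence of Definitions \ref{def:uniformposidomin}, \ref{def:deltacregion} and \ref{def:weightedfrac} without writing out the details. Your extra care about attainment of the maximum (via compactness of $Co(M_S)$, so that $\tau_f(G_S,[\lambda]_S)\geq 0$ yields an actual witness $(d^*,\nu^*)$ with $d^*\geq 0$) is exactly the one point worth making explicit, and you handle the $\tau_f=-\infty$ convention correctly.
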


\section{Experimental Examples}
\label{sec:experiment}

In this section, we show some simulation results. The main purpose is to confirm some of the less intuitive theoretical results. We first show the performance of LQF on the six-cycle graph, denoted
by $C_6$, for arrival rate vectors in different sections of the capacity
region. For $C_6$, LQF can achieve the entire interior
of the capacity region for arrivals satisfying assumption A1 and with
non-zero variances. On the other hand, for constant arrivals,
experiments have shown that some rate vectors in the interior of
the capacity region are not achievable by LQF. 


In the experiments with constant arrivals, we use a \emph{load} parameter to scale the
arrival rate vectors. Each experiment runs for $10^6$ iterations
with initial queue sizes of $10^3$. The following arrival rate vectors
are used for the results in Fig. \ref{fig:LQF.unstable.C_6}:
\begin{align}
\lambda^1 & = (\frac{1}{2}-\epsilon,\frac{1}{2}-\epsilon,\frac{1}{2}-\epsilon,\frac{1}{2}-\epsilon, \frac{1}{2}-\epsilon,\frac{1}{2}-\epsilon)' \nonumber \\ 
\lambda^2 & = (\frac{1}{2}-\epsilon,\frac{1}{2}-2\epsilon,\frac{1}{2}-3\epsilon,
\frac{1}{2}-\epsilon,\frac{1}{2}-2\epsilon,\frac{1}{2}-3\epsilon)' \nonumber \\
\lambda^3 & = (\frac{1}{2}-\epsilon,\frac{1}{2}-2\epsilon,
\frac{1}{2}-2\epsilon,\frac{1}{2}-2\epsilon,\frac{1}{2}-2\epsilon,\frac{1}{2}-2\epsilon)' \nonumber,
\end{align}
where $\epsilon = 10^{-3}$. Note that $0.7\lambda^1 < 0.95\lambda^2
< \lambda^3$. However, judging by the queue sizes in Fig. \ref{fig:LQF.unstable.C_6}, the arrival rate vectors $0.7\lambda^1$ and $0.95\lambda^2$ seem to be
not stabilizable, whereas $\lambda^3$ seems to be stabilizable. The
theory allows this counter-intuitive phenomenon. Readers can verify
that $0.7\lambda^1, 0.95\lambda^2 \notin \Delta_C$ while $\lambda^3
\in \Delta_C$.


\begin{figure}[htpb]
\centering
\includegraphics[width=3.5in]{./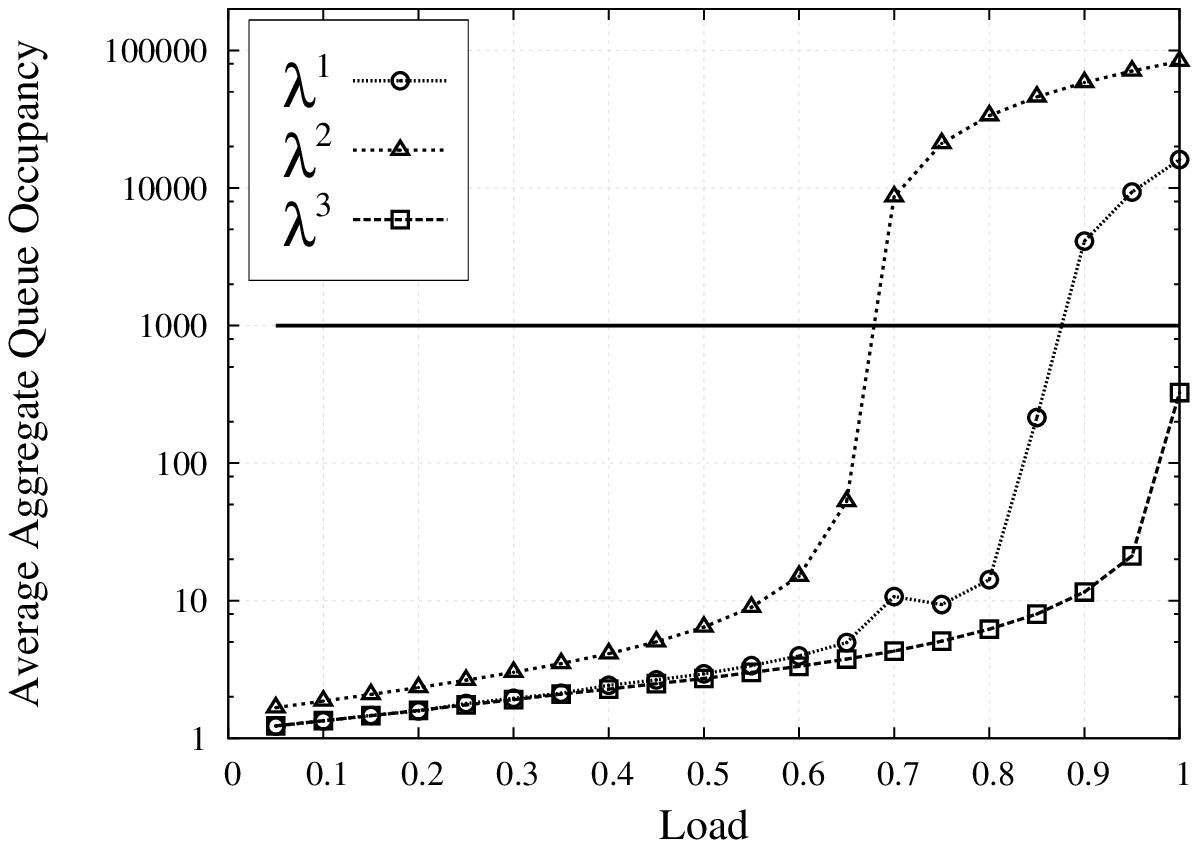}
\caption{Constant arrivals in $C_{6}$}
\label{fig:LQF.unstable.C_6}
\end{figure}

In Section \ref{sec:delta}, we generalize the definitions of high or low-rank
graphs. In Fig. \ref{fig:eight.and.six.cycles}, we provide an
interference graph that is not set local pooling ($\sigma_S^* < 1$)
and is of high rank according to the original definition in \cite{DW06}.
However, in the new definition, the graph is of low rank. We ran
simulations with initial queue sizes of $10^3$ using the Bernoulli
arrivals with an identical arrival rate of 0.499. Fig.
\ref{fig:eight.and.six.plot} shows the evolution of the average queue size 
for nodes 1-8 and 9-13 over $10^6$ iterations. The queues for nodes
1-8 appear to be unstable and the queues for nodes 9-13 appear to be
stable. Our refinement of the rank condition rules out the
possibility that the queues of all nodes are simultaneously the
longest and remain longest, whereas the previous rank condition does not
rule that out.

\begin{figure}[htpb]
\centering
\includegraphics[width= 2.5in]{./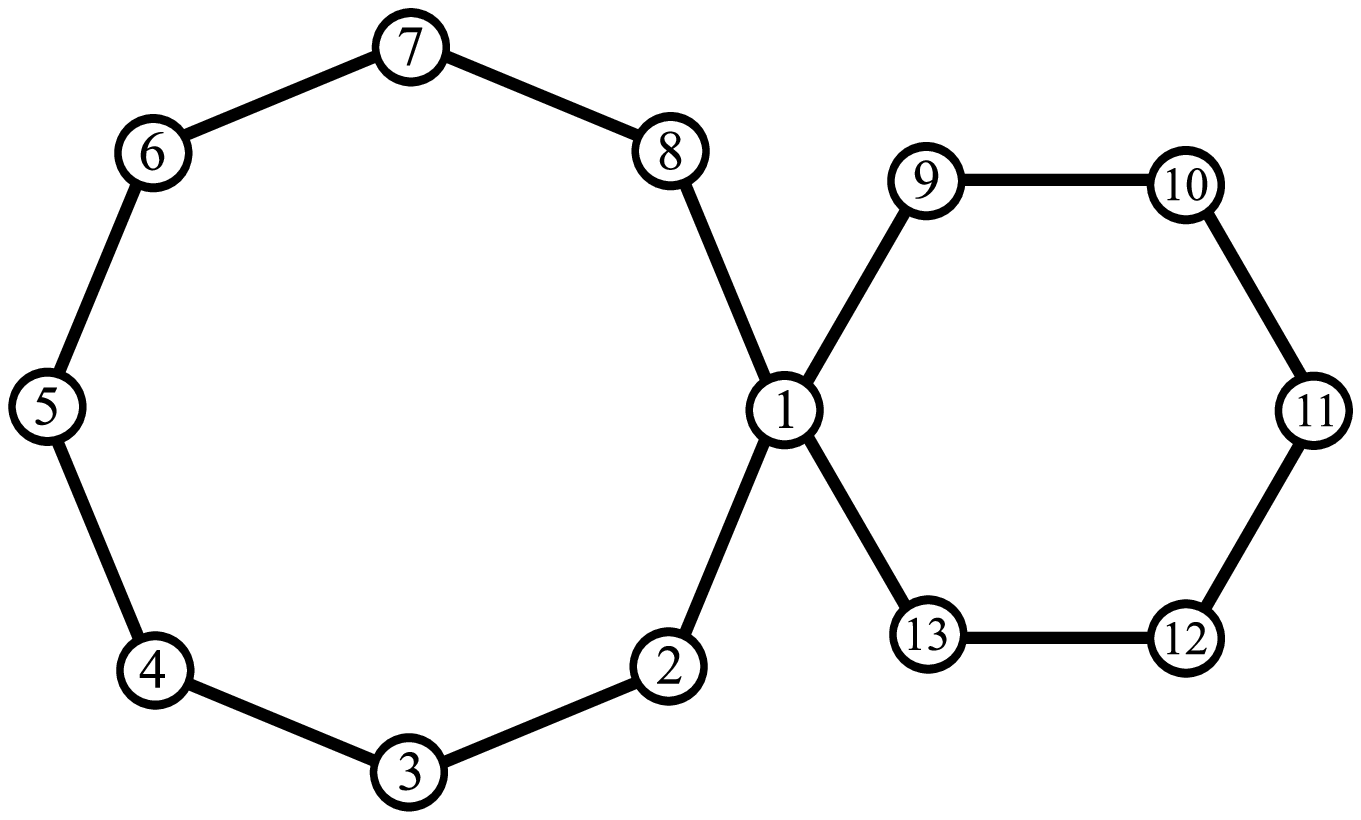}
\caption{An interference graph with $C_6$ connected to $C_8$. For
this graph, the ranks are $R(M_V) = R(M_V,e) = 12$.}
\label{fig:eight.and.six.cycles}
\end{figure}

\begin{figure}[htpb]
\centering
\includegraphics[width= 3.5in]{./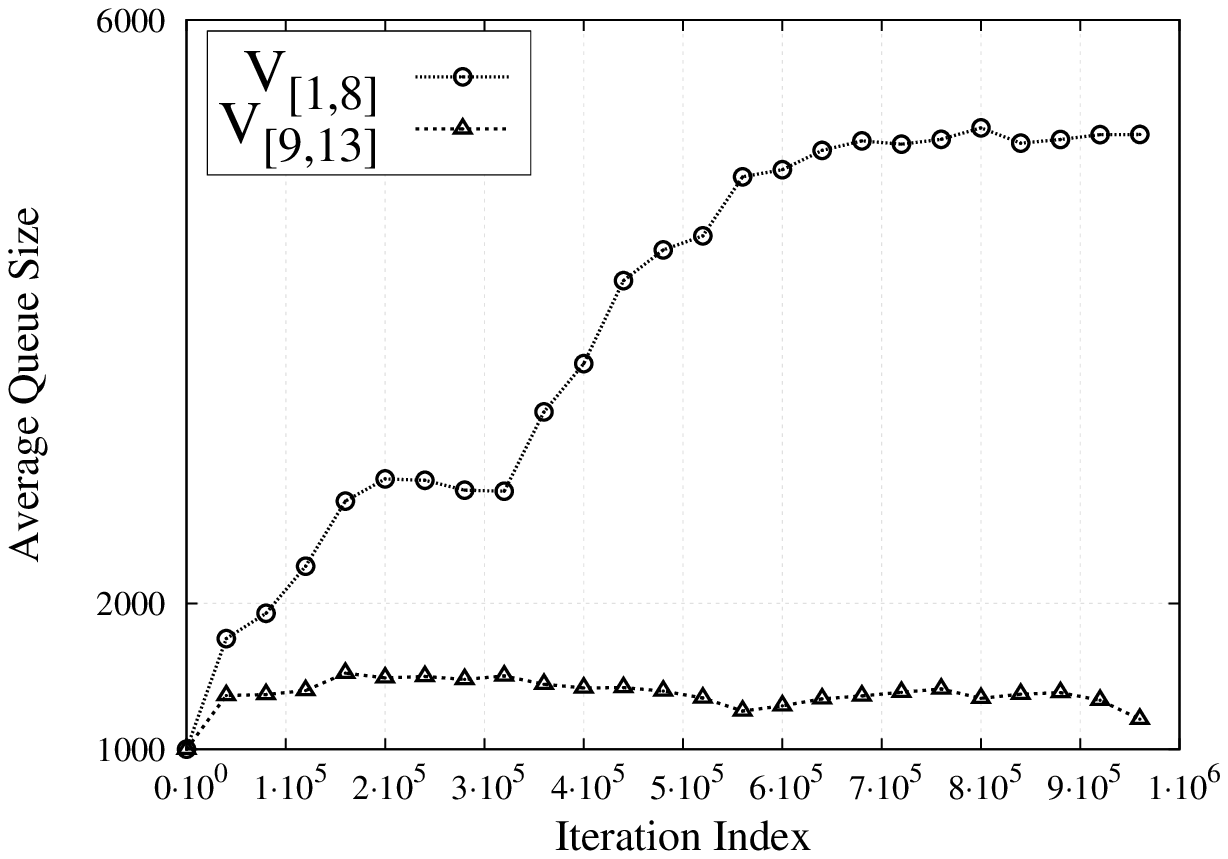}
\caption{Average queue sizes for nodes 1-8 on $C_8$ (labelled as
$V_{[1,8]}$) and for the nodes 9-13 on $C_6$ ($V_{[9,12]}$).}
\label{fig:eight.and.six.plot}
\end{figure}

\section{Conclusion}
\label{sec:conclusion}

In this paper, we investigate the performance guarantee of the LQF
scheduling policy in wireless networks. The objective is to discover new stability regions of LQF that are larger than those previously known, and to improve our knowledge about the largest possible stability region of LQF. We show that it is necessary to go beyond the existing framework of linear reduction of the capacity region, and move to a non-linear framework.

We introduce the concepts
of strictly dominating vectors and uniformly dominating
vectors; the former leads to the new stability region of LQF, $\Omega$, and the latter leads the stability regions $\Delta_C$ and $\Delta_R$. We show that $\Omega$ contains $\Sigma^*(G) \Lambda$, which is the stability region given in
\cite{BCY09}. We also show $\Omega^o \subseteq \Delta_C \subseteq \Delta_R$. Hence, the new stability regions all capture the performance of LQF better. Contrary to the previously-known regions of stability, the closures of these new stability regions contain all the extreme points of the capacity region $\Lambda$, but they are not convex in general. The only case where they are convex is when they are equal to the capacity region itself, which occurs only for selected interference graphs. The general lack of convexity is not a defect of the theory. We show that, when LQF cannot achieve the full capacity region, the largest achievable region cannot be convex.

The study reveals a counter-intuitive situation where increasing the arrival rates helps LQF to stabilize the network. It turns out, in this case, the
original rate vector is outside $\Delta_C$, and after the
rate increase, the new rate vector is inside $\Delta_C$.
We also generalize the rank condition studied in \cite{DW06}, and
with this generalization, refine the stability results for
non-deterministic arrivals. We can show that if a set of nodes
satisfies the new low-rank condition, the queue sizes of these
nodes will be separated. Based on this result, we can enlarge
$\Delta_C$ to $\Delta_R$, which is achievable by LQF under
non-deterministic arrivals. Interestingly, we show that the
closures of $\Delta_C$ and $\Delta_R$ are the same. Finally, we
introduce several linear programming problems encountered in the
fractional graph theory, which can provide tools for studying the
newly developed stability regions. We show that a ratio between
the weighted fractional coloring number and the weighted
fractional matching number is related to the set $\sigma$-local
pooling factor introduced in \cite{BCY09}.



%

\section{Appendix}
\label{sec:appendix}


\begin{proof}[Proof of Lemma \ref{indepextrem}]
Suppose $\nu$ is an independent set of $G$, represented by a 0-1
vector. Clearly, $\nu \in
\Lambda$. Let us write $\nu=a \nu^1+ (1-a) \nu^2$ for some
$0 \leq a \leq 1$, and $\nu^1, \nu^2 \in \Lambda$. Note that $\mu \leq e$
for every $\mu \in Co(M_V)$. Thus, $0\leq \nu^1 \leq e$ and $0\leq
\nu^2 \leq e$. For any index $i$, if $\nu_i=0$, we must have
$\nu^1_i=0$ and $\nu^2_i=0$. Similarly, if $\nu_i=1$,
we have $\nu^1_i=\nu^2_i=1$. Therefore, $\nu=\nu^1=\nu^2$ and $\nu$ is an extreme point of $\Lambda$.

Conversely, take any extreme point $\nu$ of $\Lambda$. Then, $\nu
\leq \mu$ for some $\mu \in Co(M_V)$. For an index $i$, if $\nu_i
>0$, we claim that $\nu_i = \mu_i$. Otherwise, we let $t=\mu_i-\nu_i>0$.
Then, we can create $\bar{\nu}$ and $\tilde{\nu}$ such that
$\bar{\nu}_j=\tilde{\nu}_j=\nu_j$ for $j \neq i$. We can find an
$\epsilon>0$ such that $\bar{\nu}_i \triangleq \nu_i-\epsilon t
\geq 0$, and we let $\tilde{\nu}_i=\mu_i > 0$. Since $0 \leq
\bar{\nu} \leq \tilde{\nu} \leq \mu$, we have $\bar{\nu}, \tilde{\nu} \in
\Lambda$. It is easy to see $\nu=\frac {1} {1+\epsilon} \bar{\nu}+
\frac{\epsilon} {1+\epsilon} \tilde{\nu}$, which implies $\nu$ is
not an extreme point.  Hence, either $\nu_i=0$ or $\nu_i=\mu_i$,
for all $i$.

We now show that $\nu_i=0$ or $\nu_i=\mu_i=1$ for all $i$. Write
$\mu$ as $\mu = \sum_{j=1}^k a_j \mu^j$, where each $\mu^j \in M_V$, each $a_j>0$, and $\sum_{j=1}^k a_j =1$. Let $\nu^j_i=0$ if $\nu_i=0$;
$\nu^j_i=\mu^j_i$ otherwise, for all $0 \leq j \leq k$. Because
$\mu^j \in M_V$, we have $\nu^j_i=0$ or $\nu^j_i=1$. It is easy to
check that $\nu= \sum_{j=1}^k a_j \nu^j$. Since $\nu^j \leq \mu^j$, we have $\nu^j \in \Lambda$ for each $j$.
Since
$\nu$ is an extreme point of $\Lambda$, $\nu^1=\nu^2=\cdots =\nu^k$.
Thus, $\nu_i=1$ or $\nu_i=0$ for all $i$.

It is easy to see that any 0-1 vector in $Co(M_V)$ must be a feasible schedule, i.e., an independent set of $G$.
Since $\mu \in Co(M_V)$, the set of nodes, $S$, for
which $\mu_i = 1$ forms an independent set. Let $S'$ be the set of
nodes for which $\nu_i = 1$. We have $S' \subseteq S$. Therefore,
$\nu$ corresponds to an independent set.
\end{proof}

\bibliographystyle{IEEEtran}
\bibliography{LQF}

\end{document}